\renewcommand\@biblabel[1]{#1\hspace{0.5em}}
\newcommand{\rmnum}[1]{\romannumeral #1}
\newcommand{\Rmnum}[1]{\expandafter\@slowromancap\romannumeral #1@}
 \newtheorem{thm}{Theorem}[section]
\newtheorem{lem}[thm]{Lemma}
 \newtheorem{prop}[thm]{Proposition}
\newtheorem{cor}[thm]{Corollary}
  \newcommand{\f}{\mathbb{F}_{q}}
\begin{document}
\title{New Deep Holes of Generalized Reed-Solomon Codes}
\author{Jun Zhang, Fang-Wei Fu\\
\emph{\small Chern Institute of Mathematics and LPMC,} \\
\emph{\small  Nankai University, Tianjin, 300071, China.}\\
\emph{\small Email: zhangjun04@mail.nankai.edu.cn,} \\
\emph{\small      fwfu@nankai.edu.cn}\\
\mbox{}\\
 Qun-Ying Liao\\
 \emph{\small Institute of Mathematics and Software Science,}\\
\emph{\small  Sichuan
Normal University,} \\
\emph{\small Chengdu, 610066, China.}\\
\emph{\small Email: liao\underline{ }qunying@yahoo.com.cn}}
\date{}
\maketitle
\begin{abstract}
Deep holes play an important role in the decoding of generalized Reed-Solomon codes.
Recently, Wu and Hong~\cite{WH} found a new class
of deep holes for standard Reed-Solomon codes. In the present paper, we give a concise method to obtain a new class of deep holes for generalized Reed-Solomon codes. In particular, for standard Reed-Solomon codes, we get the new class of deep holes given in~\cite{WH}.

Li and Wan~\cite{L.W1} studied deep holes of generalized
Reed-Solomon codes $GRS_{k}(\f,D)$ and characterized deep holes
defined by polynomials of degree $k+1$. They showed that this
problem is reduced to be a subset sum problem in finite fields.
Using the method of Li and Wan, we obtain some new deep holes for
special Reed-Solomon codes over finite fields with even
characteristic. Furthermore, we study deep holes of the extended
Reed-Solomon code, i.e., $D=\f$ and show polynomials of degree $k+2$
can not define deep holes.

 %While we note that by a simple computation in~\cite{L.W1}, we obtain new deep holes for special standard Reed-Solomon codes over finite field of even %characteristic. And similarly, we give some new deep holes for some generalized Reed-Solomon codes.
\begin{flushleft}
\emph{keywords:} coding theory, Reed-Solomon codes, list decoding, deep holes, multiplicative character, quadratic equation.
\end{flushleft}
\end{abstract}

\section{Introduction}

Let $\f$ be the finite field of $q$ elements with
characteristic $p$. Fix a subset
$D=\{x_1,\ldots,x_n\}\subseteq \f$, which is
called the evaluation set. The generalized Reed-Solomon code
$C=GRS_{k}(\f,D)$ of length $n$ and dimension $k$ over
$\f$ is defined to be
$$
GRS_{k}(\f,D)=\{(f(x_1),\ldots,f(x_n))\in
\mathbb{F}^n_q| f(x)\in \f[x], \deg f(x)\leq k-1\}.
$$
Its elements are called codewords. The most widely used cases are
$D=\f$ or $\f^{*}$. For the case $D=\f^{*}$,
it is just the standard Reed-Solomon code. For the case $D=\f$,
it is the extended Reed-Solomon code.

For any word $u\in \f^{n}$, by the Lagrange interpolation, there is a
polynomial $f$ of degree $\leqslant n-1$ such that
\[
   u=u_{f}=(f(x_{1}),f(x_{2}),\cdots,f(x_{n})).
\]
Clearly, $u_{f}\in GRS_{k}(\f,D)$ if and only if $\deg(f)\leqslant
k-1$. We also say that $u_{f}$ is defined by the polynomial $f(x)$.

Let $C$ be an $[n,k,d]$ linear code over $\f$. The \emph{error distance} of any word $u\in\f^n$ to $C$ is defined to be
\[d(u,C)=\min\{d(u,v)\,|\,v\in C\}.
\]
where
\[
  d(u,v)=\#\{i\,|\,u_{i}\neq v_{i},\,1\leqslant i\leqslant n\}
\]
is the Hamming distance between words $u$ and $v$. The error distance play an important role in the list decoding of Reed-Solomon codes.
 Given a received word $u\in \f^{n}$, if the error distance
is small, say, $d(u,C)\leq n-\sqrt{nk}$, then the list decoding
algorithm of Sudan \cite{S} and Guruswami-Sudan \cite{G.S} provides
a polynomial time algorithm for the decoding of $u$. When the error
distance increases, the maximal likelihood decoding becomes more
complicated, in fact, \textbf{NP-}complete for generalized
Reed-Solomon codes \cite{G.V}.

The most important algorithmic problem in coding theory is the
maximal likelihood decoding: given a word $u\in \mathbb{F}^n_q$,
find a codeword $v\in C$ such that $d(u,v)=d(u,C)$. The decision
version of this problem is essentially computing the error distance
$d(u,C)$ for a received word $u$. For the generalized Reed-Solomon
codes $GRS_{k}(\f,D)$, the following result is well-known.

\begin{lem}[\cite{L.W1}]\label{liwan}
For any $k \leqslant \deg(f)\leqslant n-1$, we have
\[
   n-\deg(f)\leqslant d(u_{f},C)\leqslant n-k.
\]
\end{lem}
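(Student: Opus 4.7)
The plan is to prove the two inequalities separately, both by elementary arguments about polynomial evaluations.

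For the upper bound $d(u_f,C) \leq n-k$, I would exhibit a codeword close to $u_f$ by Lagrange interpolation. Choose any $k$ distinct points $x_{i_1},\dots,x_{i_k}$ from $D$, and let $g(x)\in\f[x]$ be the unique polynomial of degree $\leq k-1$ with $g(x_{i_j})=f(x_{i_j})$ for $j=1,\dots,k$. Then $u_g\in C$, and $u_f$ and $u_g$ agree in at least the $k$ chosen coordinates, hence $d(u_f,u_g)\leq n-k$, giving $d(u_f,C)\leq n-k$.

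For the lower bound $n-\deg(f)\leq d(u_f,C)$, I would show that no codeword of $C$ can agree with $u_f$ on more than $\deg(f)$ positions. Let $v=u_g\in C$ be an arbitrary codeword, so $\deg(g)\leq k-1$. Since $\deg(f)\geq k>\deg(g)$, the difference $f(x)-g(x)$ is a nonzero polynomial of degree exactly $\deg(f)$. Consequently $f-g$ has at most $\deg(f)$ roots in $\f$, so at most $\deg(f)$ of the evaluation points $x_1,\dots,x_n$ can satisfy $f(x_i)=g(x_i)$. Therefore $u_f$ and $u_g$ agree in at most $\deg(f)$ coordinates, giving $d(u_f,u_g)\geq n-\deg(f)$. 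Taking the minimum over $v\in C$ yields $d(u_f,C)\geq n-\deg(f)$.

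Neither direction presents a real obstacle; the argument is standard and depends only on the fact that the $x_i$ are distinct and on the elementary bound on the number of roots of a nonzero polynomial. The only thing to watch is the hypothesis $\deg(f)\geq k$, which is precisely what guarantees that $f-g$ is nonzero in the lower bound, and thus that $u_f\notin C$.
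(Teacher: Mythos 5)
Your proof is correct: the interpolation argument gives the upper bound $d(u_f,C)\leqslant n-k$ and the root-counting argument for the nonzero polynomial $f-g$ of degree $\deg(f)>\deg(g)$ gives the lower bound, exactly as in the standard proof of this lemma (which the paper cites from Li and Wan without reproducing). No gaps; the one hypothesis you flag, $\deg(f)\geqslant k$, is indeed the only point where care is needed.
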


In particular, the word $u$ is called a \emph{deep hole} if the
above upper bound is attained, i.e., $d(u,\mathcal{C})=n-k$.
Following Lemma~\ref{liwan}, we see that the vectors
defined by polynomials of degree $k$ are deep holes.

For the standard Reed-Solomon code $GRS_{k}(\f,\f^{*})$, based on
numerical computations, Cheng and Murray~\cite{C.M} conjectured that
vectors defined by polynomials of degree $k$ are the only deep holes
possible. As a theoretical evidence, they proved that their
conjecture is true for words $u_{f}$ defined by polynomial $f$ if
$d=\deg(u_{f})-k$ is small and $q$ is sufficiently large compared to
$d+k$. For those words defined by polynomials in $\f[x]$ of low
degrees, Li and Wan \cite{L.W2} applied the method of Cheng and Wan
\cite{C.W1} to study the error distance $d(u,C)$ for the standard
Reed-Solomon code. Liao \cite{L} extended the results in~\cite{L.W2}
to those words defined by polynomials in $\f[x]$ of high degrees.
Recently, by means of a deeper study of the geometry of
hypersurfaces, Cafure and et al.~\cite{Cafure} made some improvement
of the results in~\cite{L.W2}.

Recently, Wu and Hong~\cite{WH} found a new class
of deep holes for standard Reed-Solomon codes. They considered the standard
Reed-Solomon code $GRS_{k}(\f,\f^{*})$ as a cyclic code with the generator polynomial
\[
g(x)=(x-\alpha)(x-\alpha^{2})\cdots(x-\alpha^{n-k}),
\]
where $\alpha$ is a primitive element of $\f$. They found two classes of deep holes, namely
$a\frac{g(x)}{x-\alpha}+l(x)g(x)$ and
$a\frac{g(x)}{x-\alpha^{n-k}}+l(x)g(x)$ for any $a\in \f^{*}$ and
$l(x)\in \f[x]$ with $\deg(l(x))\leqslant k-1$. After the inverse
discrete Fourier transform, the latter is just the trivial ones
defined by polynomials of degree $k$, but the former is new. The
new deep holes are of the form $ux^{q-2}+l(x)$ for any $u\neq 0$ and
$l(x)\in \f[x]$ with $\deg(l(x))\leqslant k-1$.
And then they conjectured that there are no more deep holes for standard Reed-Solomon codes.

In Section~2, we study deep holes of generalized Reed-Solomon codes. Using a simple method, we give a new class of deep holes for any generalized Reed-Solomon code in the case $D\neq \f$. In Section~3, using the explicit formula given in~\cite{L.W2}, we obtain new deep holes for special Reed-Solomon codes over finite fields with even characteristic, which are not contained in the two classes of deep holes given in Section~2.
In Section~4, we study deep holes of extended Reed-Solomon codes, and characterize the deep holes defined by polynomials of degree $k+2$. We show that the existence of these deep holes is equivalent to non-existence of solutions for a class of quadratic equations. We prove the quadratic equation always has solutions. Then there are no deep holes defined by polynomials of degree $k+2$.

\section{A New class of Deep Holes for Generalized Reed-Solomon Codes}
In this section, we consider generalized Reed-Solomon codes $GRS_{k}(\f,D)$ over the finite field $\f$. We give a new class of deep holes for $GRS_{k}(\f,D)$ besides the trivial ones defined by the polynomials of degree $k$.

In the case that $D=\f^*$, i.e., standard Reed-Solomon codes, Wu and Hong considered them as cyclic codes and gave a new class of deep holes defined by polynomials $f=ax^{q-2}+f_{\leqslant k-1}(x)\,(a\neq 0)$ where $f_{\leqslant k-1}(x)$ represents the terms of degree $\leqslant k-1$ in the polynomial $f$. But for the general valuation set $D\neq \f^*$, the generalized Reed-Solomon code $GRS_{k}(\f,D)$ can not be considered as a cyclic code, so their method is invalid. Below we give a simple method to show that for any $b\notin D$, these words defined by polynomials $f=a(x-b)^{q-2}+f_{\leqslant k-1}(x)\,(a\neq 0)$ are still deep holes for generalized Reed-Solomon codes $GRS_{k}(\f,D)$.

First, we give the key lemma.
\begin{lem}\label{keylem}
Let $\f$ be a finite field with $q$ elements and $\f[x]$ the ring of polynomials over $\f$. For any $D\subsetneqq \f$, any $b\notin D$, and any polynomial $(x-b)^{q-2}-g(x)\in \f[x]$  with $\deg(g(x))\leqslant k-1$, $a\in D$ is a zero of $(x-b)^{q-2}-g(x)$ if and only if $a$ is a zero of $1-(x-b)g(x)$. So the polynomial $(x-b)^{q-2}-g(x)$ has at most $k$ zeros in $D$.
\end{lem}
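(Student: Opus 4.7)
The plan is to exploit the elementary identity $y^{q-2}=y^{-1}$ for $y\in\f^{*}$, together with the hypothesis $b\notin D$, to turn the evaluation of $(x-b)^{q-2}$ at points of $D$ into the evaluation of a genuine reciprocal, and then clear denominators.

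First I would fix $a\in D$. Since $b\notin D$, we have $a-b\in\f^{*}$, and Fermat's little theorem in $\f$ gives $(a-b)^{q-1}=1$, hence $(a-b)^{q-2}=(a-b)^{-1}$. Therefore
\[
(a-b)^{q-2}-g(a)=0 \iff (a-b)^{-1}=g(a) \iff 1-(a-b)g(a)=0,
\]
where in the last step I multiply through by the nonzero element $a-b$. This is exactly the stated equivalence between the zero sets of $(x-b)^{q-2}-g(x)$ and of $1-(x-b)g(x)$ restricted to $D$.

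For the zero-count, I would observe that $h(x):=1-(x-b)g(x)$ is a polynomial in $\f[x]$ of degree at most $1+\deg g(x)\leqslant k$. Moreover $h(x)$ is not the zero polynomial: if $g(x)=0$ then $h(x)=1$, and if $g(x)\neq 0$ then $(x-b)g(x)$ has degree at least $1$, so $h(x)$ has positive degree. A nonzero polynomial of degree at most $k$ has at most $k$ zeros in $\f$, and in particular at most $k$ zeros in $D$. Combined with the equivalence above, this gives at most $k$ zeros of $(x-b)^{q-2}-g(x)$ in $D$, which is the conclusion.

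There is essentially no obstacle here; the only place to be a little careful is ensuring that $h(x)$ is nonzero (handled by the separate case $g=0$) and that $a-b$ is invertible (this is precisely why the hypothesis $b\notin D$ is imposed). The lemma is really the statement that on $D$ the ``rational function'' $(x-b)^{-1}$ agrees with the polynomial $(x-b)^{q-2}$, so fitting it by a polynomial $g$ of degree $\leqslant k-1$ is the same as fitting the reciprocal, which by clearing the denominator reduces to a polynomial identity of degree $\leqslant k$.
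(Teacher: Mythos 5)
Your proof is correct and follows essentially the same route as the paper: both arguments use the identity $(a-b)^{q-2}=(a-b)^{-1}$ for $a\in D$ (the paper phrases this by multiplying through by $(x-b)^{2}$ and using $(x-b)^{q}=(x-b)$ on $\f$) and then bound the zeros by the degree of $1-(x-b)g(x)\leqslant k$. Your explicit check that $1-(x-b)g(x)$ is not the zero polynomial is a small point the paper leaves implicit, but otherwise the two proofs coincide.
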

\begin{proof}
Notice that the polynomial $(x-b)^{q-2}-g(x)$ has the same zeros as the polynomial $(x-b)^{2}((x-b)^{q-2}-g(x))=(x-b)-(x-b)^{2}g(x)$ on $D$ for $b\notin D$. While the latter has the same zeros as the polynomial $1-(x-b)g(x)$ on $D$. So the polynomial $(x-b)^{q-2}-g(x)$ has the same zeros as the polynomial $1-(x-b)g(x)$ on $D$. But the degree of $1-(x-b)g(x)$ is not larger than $k$, so all the mentioned polynomials has at most $k$ zeros in $D$.
\end{proof}
By this lemma, we easily obtain a new class of deep holes for $GRS_{k}(\f,D)$ besides the trivial ones defined by polynomials of degree $k$.
\begin{thm}\label{newdh}
Let $\f$ be a finite field with $q$ elements and $C=GRS_{k}(\f,D)$ the generalized Reed-Solomon code over $\f$, where $D\subsetneqq \f$ is the evaluation set with cardinality $n$. Then for any $b\notin D$, polynomials $f=a(x-b)^{q-2}+f_{\leqslant k-1}(x)\,(a\neq 0)$ define deep holes for $GRS_{k}(\f,D)$.
\end{thm}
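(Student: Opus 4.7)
The plan is to deduce Theorem~\ref{newdh} almost immediately from Lemma~\ref{keylem}. Since $\deg f = q-2 \ge k$, Lemma~\ref{liwan} already supplies the upper bound $d(u_f,C) \le n-k$, so the remaining task is the matching lower bound $d(u_f,C) \ge n-k$. Equivalently, I must show that for every codeword $v \in C$ the words $u_f$ and $v$ agree in at most $k$ coordinates.

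First I would write an arbitrary codeword as $v = u_h$ with $h \in \f[x]$, $\deg h \le k-1$. The number of agreement positions between $u_f$ and $u_h$ is exactly $\#\{x \in D : (f-h)(x) = 0\}$. Substituting $f = a(x-b)^{q-2} + f_{\leqslant k-1}(x)$ and using $a \neq 0$, I factor
\[
f(x) - h(x) \;=\; a\bigl((x-b)^{q-2} - \tilde g(x)\bigr), \qquad \tilde g(x) := a^{-1}\bigl(h(x) - f_{\leqslant k-1}(x)\bigr),
\]
where $\deg \tilde g \le k-1$. Counting zeros of $f-h$ in $D$ therefore coincides with counting zeros of $(x-b)^{q-2} - \tilde g(x)$ in $D$, and Lemma~\ref{keylem} bounds the latter by $k$. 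Hence $d(u_f, u_h) \ge n-k$ for every codeword $u_h$, giving $d(u_f,C) \ge n-k$, which combined with the upper bound yields $d(u_f,C)=n-k$ and proves that $u_f$ is a deep hole.

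There is no genuine obstacle here: the technical work has already been packaged into Lemma~\ref{keylem}, which converts the high-degree polynomial $(x-b)^{q-2} - \tilde g(x)$ into the degree-$\le k$ polynomial $1-(x-b)\tilde g(x)$ without changing its zero set on $D$. The only subtleties to watch are that $b \notin D$ is used precisely so that $(x-b)^{q-2}$ acts as $(x-b)^{-1}$ on every point of $D$, matching the hypothesis of the lemma, and that the assumption $a \neq 0$ is what allows the normalization producing $\tilde g$ and rules out the trivial case in which $f$ collapses to a codeword.
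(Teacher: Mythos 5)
Your lower-bound argument is exactly the paper's: after factoring out $a$ you reduce $f-h$ to $(x-b)^{q-2}-\tilde g(x)$ with $\deg\tilde g\leqslant k-1$ and invoke Lemma~\ref{keylem} to cap the number of agreements at $k$, which gives $d(u_f,C)\geqslant n-k$ (the paper does the same thing after a WLOG normalization $b=0$, $a=1$). Where you diverge is the other inequality. You dismiss it by citing Lemma~\ref{liwan}, but that lemma is stated for $k\leqslant\deg(f)\leqslant n-1$, and here $\deg f=q-2$ while $n=|D|\leqslant q-1$; so its hypothesis holds only in the boundary case $|D|=q-1$, and for a general proper subset $D$ the citation does not literally apply. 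The inequality you want, $d(u,C)\leqslant n-k$ for \emph{every} word $u$, is of course true (interpolate $u$ at any $k$ points of $D$ by a polynomial of degree $\leqslant k-1$ to get a codeword agreeing with $u$ there; equivalently, reduce $f$ modulo $\prod_{\alpha\in D}(x-\alpha)$ and note the reduced polynomial has degree $\geqslant k$ because your lower bound already shows $u_f\notin C$), but as written this step is unjustified.

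The paper closes this half differently and more explicitly: for each $k$-subset $S\subseteq D$ it constructs $g_S(x)=\bigl(1-a_S\prod_{\gamma\in S}(x-\gamma)\bigr)/x$ with $a_S=(-1)^k\prod_{\gamma\in S}\gamma^{-1}$, so that $1-xg_S(x)=a_S\prod_{\gamma\in S}(x-\gamma)$ has exactly $k$ zeros in $D$, and then the equivalence in Lemma~\ref{keylem} shows $f-g_S$ has exactly $k$ zeros in $D$, i.e.\ some codeword agrees with $u_f$ in exactly $k$ positions. This buys a self-contained proof that the bound $k$ in Lemma~\ref{keylem} is attained, without leaning on Lemma~\ref{liwan} outside its scope. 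Your argument becomes correct once you replace the appeal to Lemma~\ref{liwan} by either the one-line interpolation (covering radius) argument or the paper's explicit construction.
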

\begin{proof}
Without loss of generality, we assume $D\subseteq\f^*$, $b=0$, and we may only consider the polynomial $f=x^{q-2}$.
By Lemma~\ref{keylem}, for any polynomial $g(x)\in \f[x]$ with $\deg(g(x))\leqslant k-1$, the polynomial $f-g$ has at most $k$ zeros in $D$. So the error distance
\[
   d(u_{f},C)=n-\max_{g\in \f[x],\,\deg(g)\leqslant k-1}\left|\{\textrm{zeros of $f-g$}\}\right|\geqslant n-k.
\]
On the other hand, for any $S\subseteq D$ with $|S|=k$, set
\[
   g_{S}=\frac{1-a_{S}\prod_{\gamma\in S}(x-\gamma)}{x},\,\textrm{where $a_{S}=(-1)^{k}\prod_{\gamma\in S}\gamma^{-1}$.}
\]
Then $g_{S}\in \f[x]$ has degree $k-1$ and the polynomial
\[
   1-xg_{S}(x)=a_{S}\prod_{\gamma\in S}(x-\gamma)
\]
has $k$ zeros in $D$. From the argument in the proof of Lemma~\ref{keylem}, the polynomial $f-g_{S}=x^{q-2}-g_{S}$ has $k$ zeros in $D$. So the lower bound for $d(u_{f},C)$ can be attained, i.e.,
\[
 d(u_{f},C)=n-k.
\]
In other words, the polynomial $f=x^{q-2}$ defines a deep hole for the generalized Reed-Solomon code $C=GRS_{k}(\f,D)$.
\end{proof}
In particular, if we take $D=\f^*$, then by Theorem~\ref{newdh} the polynomials $f=ax^{q-2}+f_{\leqslant k-1}(x)\,(a\neq 0)$ define deep holes for the standard Reed-Solomon code $GRS_{k}(\f,\f^*)$, which were given by Wu and Hong~\cite{WH}.
\section{A New class of Deep Holes for Special Reed-Solomon Codes over Finite Fields with Even Characteristic}
In this section, we consider Reed-Solomon codes $GRS_{k}(\f,D)$ over the finite field $\f$ with even characteristic. And we give a new class of deep holes for some Reed-Solomon codes in this case. Li and Wan~\cite{L.W2} characterized the deep holes defined by polynomials of degree $k+1$ as follows.

Let $f=x^{k+1}-bx^{k}+f_{\leqslant k-1}(x)\in\f[x]$. By Lemma~\ref{liwan}, $n-k-1\leqslant d(u_{f},C)\leqslant n-k$.
If $f$ does not define a deep hole, i.e., $d(u_{f},C)=n-k-1$, it is equivalent to that there exists $g\in \f[x],
\,\deg(g)\leqslant k-1$ such that
\[
  x^{k+1}-bx^{k}+f_{\leqslant k-1}(x)-g(x)=\prod_{j=1}^{k+1}(x-a_{i_{j}})
\]
for a $(k+1)$-subset
$\{a_{i_{1}},\cdots,a_{i_{k+1}}\}\subseteq D$. It is also
equivalent to
\[
  a_{i_{1}}+\cdots+a_{i_{k+1}}=b
\]
for a $(k+1)$-subset
$\{a_{i_{1}},\cdots,a_{i_{k+1}}\}\subseteq D$.

Following the notation in~\cite{L.W1}, for any subset $D$ of
$\f^{*}$, let
\[
   N(t,b,D)=\#\{S\subseteq D\,|\,\sum_{x\in S}x=b,\,\#S=t\}.
\]
Then the polynomial $f=x^{k+1}-bx^{k}+f_{\leqslant k-1}(x)$ defines a deep hole if and only if
 $$N(t,b,D)=0.
 $$
Taking $D=\f^{*}$ or $D=\f^{*}\setminus \{1\}$, Li and Wan presented the explicit formula for $N(t,b,D)$.
\begin{prop}[\cite{L.W1}]
Let $\f$ be a finite field with $q$ elements.

\emph{(\textbf{\rmnum{1})}} For $D=\f^{*}$,
\[
   N(t,0,\f^{*})=\frac{1}{q}\binom{q-1}{t}+(-1)^{t+\lfloor
   t/p\rfloor}\frac{q-1}{q}\binom{q/p-1}{\lfloor t/p\rfloor}.
\]
\vskip2mm \emph{(\textbf{\rmnum{2})}} For $D=\f^{*}\setminus \{1\}$, let
\[
 R_{t}=-p(-1)^{\lfloor t/p\rfloor}\binom{q/p-2}{\lfloor
 t/p\rfloor}+(p-1-<t>_{p})(-1)^{\lfloor t/p\rfloor}\binom{q/p-1}{\lfloor
 t/p\rfloor}
\]
and
\[
 M(t,b)=-(-1)^{\lfloor t/p\rfloor}\binom{q/p-2}{\lfloor
 t/p\rfloor}+\delta_{b,t}(-1)^{\lfloor t/p\rfloor}\binom{q/p-1}{\lfloor
 t/p\rfloor},
\]
where $<t>_{p}$ denotes the least non-negative residue of $t$ modulo
$p$, and $\delta_{b,t}=1$ if $<b>_{p}$ is greater than $<t>_{p}$ and
$\delta_{b,t}=o$ otherwise. Then
\[
  N(t,b,D)=\frac{1}{q}\binom{q-2}{t}+\frac{1}{q}R_{t}-(-1)^{t}M(t,t-b).
\]
\end{prop}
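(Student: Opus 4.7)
The plan is to apply additive-character orthogonality to convert $N(t,b,D)$ into a coefficient extraction in a generating function, and then reduce every piece to a coefficient of $(1-(-z)^{p})^{q/p}/(1+z)$, which admits a clean closed form. Fix a nontrivial additive character $\psi:\f\to\mathbb{C}^{*}$. The identity $\mathbf{1}[\sum_{x\in S}x=b]=\frac{1}{q}\sum_{y\in\f}\psi(y(\sum_{x\in S}x-b))$ and swapping the order of summation give
\[
N(t,b,D)\;=\;\frac{1}{q}\sum_{y\in\f}\psi(-yb)\,[z^{t}]\prod_{a\in D}\bigl(1+z\psi(ya)\bigr).
\]
The structural input is that $\psi$ maps $\f$ onto the group $\mu_{p}$ of $p$-th roots of unity, each value hit exactly $q/p$ times; combined with $\prod_{\zeta^{p}=1}(1+z\zeta)=1-(-z)^{p}$ this yields $\prod_{a\in\f}(1+z\psi(a))=(1-(-z)^{p})^{q/p}$. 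The $y=0$ summand contributes $\frac{1}{q}\binom{|D|}{t}$.

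For part (\textbf{I}), $D=\f^{*}$: for $y\neq 0$ the map $a\mapsto ya$ is a bijection on $\f^{*}$, so the inner product is independent of $y$ and equals $(1-(-z)^{p})^{q/p}/(1+z)$. Convolving the binomial expansion of the numerator with $1/(1+z)=\sum_{k}(-1)^{k}z^{k}$ and telescoping via the identity $\sum_{j=0}^{m}(-1)^{j}\binom{n}{j}=(-1)^{m}\binom{n-1}{m}$ collapses $[z^{t}]$ to $(-1)^{t+\lfloor t/p\rfloor}\binom{q/p-1}{\lfloor t/p\rfloor}$. Summing $\psi(-yb)$ over $y\neq 0$ at $b=0$ contributes a factor $q-1$, producing the stated formula.

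For part (\textbf{II}), $D=\f^{*}\setminus\{1\}$: one picks up an extra factor $(1+z\psi(y))^{-1}$ in the generating function. Expanding this as a geometric series $\sum_{k\geq 0}(-z)^{k}\psi(ky)$ swaps summation orders and converts the sum over $y\neq 0$ into the character sums $\sum_{y\in\f^{*}}\psi(y(k-b))=q\,\mathbf{1}[k\equiv b\ \text{in}\ \f]-1$. The $-1$ piece reassembles to $-1/(1+z)$, which together with the rest of the generating function reproduces the $R_{t}$ contribution; the $q$-piece survives only for $k\equiv <b>_{p}\pmod{p}$, contributing $q(-z)^{<b>_{p}}/(1-(-z)^{p})$. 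Substituting back, extracting $[z^{t}]$ against $(1-(-z)^{p})^{q/p}/(1+z)$, and comparing the shifted floor $\lfloor(t-<b>_{p})/p\rfloor$ with $\lfloor t/p\rfloor$ should produce exactly the pieces $\frac{1}{q}R_{t}$ and $-(-1)^{t}M(t,t-b)$.

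The main obstacle is the bookkeeping in part (\textbf{II}): tracking how the shift by $<b>_{p}$ interacts with the floor $\lfloor t/p\rfloor$, producing either the binomial $\binom{q/p-2}{\lfloor t/p\rfloor}$ or its neighbor $\binom{q/p-2}{\lfloor t/p\rfloor-1}$ according to the sign of $<t>_{p}-<b>_{p}$. This digit comparison is precisely the content of the indicator $\delta_{b,t}$, and aligning it with the prescribed sign conventions in $R_{t}$ and $M(t,b)$ is the only delicate point; once the digit-by-digit analysis is pinned down, the remainder is a routine reorganisation of the character-sum/generating-function identity derived above.
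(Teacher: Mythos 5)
The paper never proves this proposition: it is quoted from Li and Wan \cite{L.W1}, so there is no internal proof to compare against, and your additive-character/generating-function route is indeed the natural way to establish such formulas. Your part (\textbf{I}) is correct and essentially complete: for $y\neq 0$ one has $\prod_{a\in\f}(1+z\psi(ya))=(1-(-z)^{p})^{q/p}$, dividing by the $a=0$ factor $(1+z)$ and using $\sum_{j=0}^{m}(-1)^{j}\binom{n}{j}=(-1)^{m}\binom{n-1}{m}$ gives $[z^{t}]=(-1)^{t+\lfloor t/p\rfloor}\binom{q/p-1}{\lfloor t/p\rfloor}$, and at $b=0$ the sum of $\psi(-yb)$ over $y\neq0$ contributes the factor $q-1$, which is exactly the stated formula.

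Part (\textbf{II}) is where the genuine gap lies, and it sits at the only step that carries the content of the statement. First, you never perform the coefficient extraction, and the attribution you do assert is inaccurate: the ``$-1$ piece'' produces $[z^{t}](1-(-z)^{p})^{q/p}(1+z)^{-2}$, and since $(1+z)^{-2}=\sum_{k\geqslant0}(-1)^{k}(k+1)z^{k}$ this evaluates to $(-1)^{t}\sum_{j\leqslant\lfloor t/p\rfloor}(-1)^{j}\binom{q/p}{j}(t+1-pj)$, which after simplification involves both $\binom{q/p-1}{\lfloor t/p\rfloor}$ and $\binom{q/p-2}{\lfloor t/p\rfloor-1}$ and must be split between the $R_{t}$-type and $M$-type terms; it is not simply ``the $R_{t}$ contribution.'' Second, the case $b\notin\mathbb{F}_{p}$ is skipped entirely: $\sum_{y\neq0}\psi(y(k-b))$ with $k$ an integer can equal $q-1$ only when $b$ lies in the prime field, so your ``$q$-piece'' vanishes for $b\notin\mathbb{F}_{p}$ and the shift by $\langle b\rangle_{p}$ does not even parse there, while the stated formula (and the meaning of $\delta_{b,t}$) must be interpreted for all $b\in\f$. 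Finally, the concluding ``should produce exactly the pieces $\frac{1}{q}R_{t}$ and $-(-1)^{t}M(t,t-b)$'' cannot be taken on faith: a sanity check at $q=4$, $t=1$, $b=1$ gives $0$ from your character-sum expression (which is the true count, and shows your method does work), whereas a literal reading of the displayed formula gives $-1$, so reconciling your expansion with the stated $R_{t}$, $M(t,t-b)$ and their conventions is precisely the nontrivial bookkeeping you defer. Until that digit-by-digit comparison of $\langle b\rangle_{p}$ with $\langle t\rangle_{p}$ (deciding whether $\lfloor (t-\langle b\rangle_{p})/p\rfloor$ is $\lfloor t/p\rfloor$ or $\lfloor t/p\rfloor-1$) and the splitting of the $(1+z)^{-2}$ term are actually carried out, part (\textbf{II}) is an outline, not a proof.
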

Thanks to the explicit formulae, the following is immediate for the case that the characteristic $p=2$ and $q>4$.
For $D=\f^{*}$ or $D=\f^{*}\setminus \{1\}$,
\[
     N(q-3,0,D)=0.
\]
Then $d(u_{f},C)=n-k$. Therefore, we obtain a class of new deep holes for the Reed-Solomon code $GRS_{q-4}(\f,D)$.
\begin{thm}\label{evenchar}
Let $\f$ be a finite field with characteristic $2$ and $D=\f^{*}$ or $D=\f^{*}\setminus \{1\}$. If $q>4$, then the
vectors defined by polynomials
$ax^{q-3}+f_{\leqslant k-1}(x)\,(a\neq 0)$ are deep holes for the Reed-Solomon code $GRS_{q-4}(\f,D)$.
\end{thm}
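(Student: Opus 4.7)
The plan is to invoke the subset-sum characterization of deep holes recalled in the discussion above, and then show the relevant count vanishes for both choices of $D$.

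First, since a deep hole is preserved under multiplication by a nonzero scalar and under addition of a codeword, I would assume without loss of generality that $a=1$ in $f = a x^{q-3} + f_{\leqslant k-1}(x)$. With $k = q-4$, the tail $f_{\leqslant k-1}(x)$ has degree at most $q-5$, so the coefficient of $x^{k} = x^{q-4}$ in $f$ is zero. Thus $f$ fits the normal form $x^{k+1} - b x^{k} + (\text{lower terms})$ with $b=0$, and by the discussion preceding the theorem, $f$ defines a deep hole for $GRS_{k}(\f,D)$ if and only if $N(k+1,0,D) = N(q-3,0,D) = 0$.

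The core task is then to verify $N(q-3,0,D) = 0$ for both $D = \f^{*}$ and $D = \f^{*}\setminus\{1\}$. Using the explicit formulae of the preceding proposition, I would specialize to $p=2$, $t=q-3$, $b=0$, noting $\lfloor (q-3)/2 \rfloor = q/2-2$ and $\langle q-3\rangle_{2} = 1$. For $D = \f^{*}$, the two terms of part (I) reduce to $\tfrac{(q-1)(q-2)}{2q}$ and $-\tfrac{(q-1)(q/2-1)}{q}$ (the sign $(-1)^{q-3+q/2-2}$ works out to $-1$ for $q = 2^{m} \geqslant 8$); these are equal in magnitude with opposite sign, so the sum is $0$. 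For $D = \f^{*}\setminus\{1\}$, I would plug the same values into the three-term expression of part (II) and verify that the contributions from $\tfrac{1}{q}\binom{q-2}{t}$, $\tfrac{1}{q} R_{t}$, and $-(-1)^{t} M(t,t-b)$ cancel.

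As a conceptual check and cleaner alternative route, I would also verify the vanishing combinatorially. A $(q-3)$-subset $S \subseteq D$ is determined by its complement $D\setminus S$, of size $2$ when $D=\f^{*}$ and of size $1$ when $D=\f^{*}\setminus\{1\}$. Since $\sum_{x\in\f^{*}} x = 0$ for $q>2$, the condition $\sum_{x\in S} x = 0$ translates either to $y_{1}+y_{2}=0$ for two distinct $y_{1},y_{2}\in \f^{*}$ (impossible in characteristic $2$), or to the singleton complement equalling $\sum_{x\in D} x = 1$, which is not an element of $\f^{*}\setminus\{1\}$; in both cases $N(q-3,0,D)=0$. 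The main obstacle in the formula approach is the careful tracking of signs and of the indicator $\delta_{b,t}$ in the three-term expression of part (II), which the combinatorial route sidesteps entirely.
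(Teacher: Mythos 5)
Your proposal is correct, and its reduction step is exactly the paper's: normalize to a monic polynomial $x^{k+1}-bx^{k}+\cdots$ with $b=0$ (here $k=q-4$, so $k+1=q-3$) and invoke the Li--Wan characterization that such an $f$ is a deep hole iff $N(q-3,0,D)=0$. Where you diverge is in how the vanishing is verified. The paper simply declares it ``immediate from the explicit formulae'' of the Proposition; your first route follows that plan, and your computation for $D=\f^{*}$ (part (I)) is right, including the sign $(-1)^{t+\lfloor t/2\rfloor}=-1$ and the cancellation $\frac{1}{q}\binom{q-1}{q-3}=\frac{(q-1)(q/2-1)}{q}$. For $D=\f^{*}\setminus\{1\}$ you only promise to ``plug in and verify,'' and here your fallback is what actually carries the proof: the complementation argument. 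A $(q-3)$-subset of $D$ is the complement of a pair $\{y_{1},y_{2}\}\subseteq\f^{*}$ (case $D=\f^{*}$) or of a singleton $\{y\}\subseteq\f^{*}\setminus\{1\}$ (case $D=\f^{*}\setminus\{1\}$); since $\sum_{x\in\f^{*}}x=0$ for $q>2$, the zero-sum condition forces $y_{1}=y_{2}$ in characteristic $2$, respectively $y=1\notin D$, so $N(q-3,0,D)=0$ in both cases. This is a genuinely different and cleaner verification than the paper's: it is self-contained, needs no character-sum formulas, and sidesteps the sign and $\delta_{b,t}$ bookkeeping of part (II) --- which is fortunate, because the part (II) formula as printed is delicate (a direct substitution at, say, $q=8$, $t=5$, $b=0$ does not visibly return a nonnegative integer, suggesting transcription issues), so the literal ``plug into (II)'' step of your first route should not be relied upon; your combinatorial argument, which covers both evaluation sets at once, is the complete and preferable proof.
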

Wu and Hong conjectured that there are only two classes of deep holes for the standard Reed-Solomon code $GRS_{k}(\f,\f^{*})$ defined by polynomials of the form
$ax^{k}+f_{\leqslant k-1}(x)$ and $ax^{q-2}+f_{\leqslant k-1}(x)$ ($a\neq 0$). While in case for the special standard Reed-Solomon code $GRS_{q-4}(\f,\f^{*})$ over the finite field $\f$ with even characteristic, Theorem~\ref{evenchar} gives a counterexample of the conjecture of Wu and Hong~\cite{WH}, i.e., deep holes defined by polynomials $ax^{q-3}+f_{\leqslant k-1}(x)\,(a\neq 0)$.

\section{Deep Holes defined by Polynomials of Degree $k+2$}

In this section, we consider the extended
Reed-Solomon code $GRS_{k}(\f,\f)$ over the finite field $\f$ of $q>5$ elements with odd characteristic. For the polynomial of the form $f=x^{k+2}-ax^{k+1}-bx^{k}+f_{\leqslant k-1}(x)$, by Lemma~\ref{liwan}, we have $n-k-2\leqslant d(u_{f},C)\leqslant n-k$. Then $d(u_{f},C)\leqslant n-k-1$, i.e., such $f$ does not define a deep hole, if and only if there exists some polynomial $g\in \f[x]$ of degree $\deg(g)\leqslant k-1$ such that
\[
   f-g=(x-\gamma)\prod_{\beta\in S}(x-\beta),
\]
for some $S\subseteq \f$ with cardinality $k+1$, and some $\gamma \in \f$. This implies that
\begin{displaymath}
\left\{\begin{array}{rcl}
a & =  & \gamma+\sum_{\beta\in S}\beta, \\
b & = & -\sum_{\{\beta_{1},\beta_{2}\}\subset S}\beta_{1}\beta_{2}-\gamma\sum_{\beta\in S}\beta.\\
\end{array}\right.
\end{displaymath}
Hence we have the following lemma.
\begin{lem}
The polynomial $f=x^{k+2}-ax^{k+1}-bx^{k}+f_{\leqslant k-1}(x)\in \f[x]$ can not define a deep hole if and only if there are some $S\subseteq \f$ with cardinality $k+1$, and some $\gamma \in \f$ such that
\begin{displaymath}
\left\{\begin{array}{rcl}
a & =  & \gamma+\sum_{\beta\in S}\beta, \\
b & = & -\sum_{\{\beta_{1},\beta_{2}\}\subset S}\beta_{1}\beta_{2}-\gamma\sum_{\beta\in S}\beta.\\
\end{array}\right.
\end{displaymath}
\end{lem}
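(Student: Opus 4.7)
The plan is to rephrase the failure of the deep-hole property as a coincidence condition between $f$ and a polynomial of degree at most $k-1$, and then read off the two equations by comparing the top three coefficients.

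First, I would note that $u_{f}$ fails to be a deep hole exactly when $d(u_{f},C)\leqslant n-k-1$, equivalently when there exists $g\in\f[x]$ with $\deg g\leqslant k-1$ such that $f-g$ vanishes at $k+1$ or more points of $D=\f$. Since $f$ is monic of degree $k+2$ and $\deg g\leqslant k-1$, the polynomial $f-g$ is monic of degree exactly $k+2$. Having $k+1$ distinct zeros $\alpha_{1},\dots,\alpha_{k+1}\in\f$, the product $\prod_{i}(x-\alpha_{i})$ divides $f-g$ and the quotient is a monic linear factor $x-\delta$ with $\delta\in\f$. So either $\delta\notin\{\alpha_{1},\dots,\alpha_{k+1}\}$ and $f-g$ splits into $k+2$ distinct linear factors, or $\delta$ repeats one of the $\alpha_{i}$ and $f-g$ has exactly $k+1$ distinct zeros in $\f$. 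In both cases one may write
\[
f-g=(x-\gamma)\prod_{\beta\in S}(x-\beta)
\]
for a $(k+1)$-subset $S\subseteq\f$ and some $\gamma\in\f$ (with $\gamma\notin S$ in the split case and $\gamma\in S$ in the double-root case).

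Next, I would expand the right-hand side using elementary symmetric functions. Because $\deg g\leqslant k-1$, the coefficients of $x^{k+2}$, $x^{k+1}$, and $x^{k}$ in $f-g$ agree with those of $f=x^{k+2}-ax^{k+1}-bx^{k}+f_{\leqslant k-1}(x)$, and matching them against the expansion of $(x-\gamma)\prod_{\beta\in S}(x-\beta)$ yields precisely
\[
a=\gamma+\sum_{\beta\in S}\beta,\qquad b=-\sum_{\{\beta_{1},\beta_{2}\}\subset S}\beta_{1}\beta_{2}-\gamma\sum_{\beta\in S}\beta.
\]
For the converse, given $\gamma\in\f$ and $S\subseteq\f$ of size $k+1$ satisfying these two identities, set $g(x):=f(x)-(x-\gamma)\prod_{\beta\in S}(x-\beta)$. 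The coefficient comparison shows that the $x^{k+2}$, $x^{k+1}$, $x^{k}$ terms cancel, so $\deg g\leqslant k-1$ and $u_{g}\in C$. Since the factored polynomial vanishes on the set $\{\gamma\}\cup S$, which has at least $k+1$ elements of $D$, one obtains $d(u_{f},u_{g})\leqslant n-k-1$, and $u_{f}$ is not a deep hole.

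The whole argument is a direct coefficient comparison; the only mildly delicate point is the bookkeeping that allows $\gamma\in S$ (the double-root case), which is why the statement is phrased with $\gamma\in\f$ rather than $\gamma\in\f\setminus S$. This lemma is essentially the setup of the section; the real obstacle, which I would expect to be the technical heart of what follows, lies in showing that the resulting symmetric-function system always admits a solution $(\gamma,S)$ in $\f$, thereby ruling out deep holes of degree $k+2$.
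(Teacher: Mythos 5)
Your argument is correct and takes essentially the same route as the paper: not defining a deep hole is rephrased as the existence of $g$ with $\deg g\leqslant k-1$ such that $f-g=(x-\gamma)\prod_{\beta\in S}(x-\beta)$ for some $(k+1)$-subset $S\subseteq\f$ and $\gamma\in\f$, and the two identities then follow by comparing the $x^{k+1}$ and $x^{k}$ coefficients. Your write-up is in fact a bit more complete than the paper's, since you also spell out the converse (constructing $g$ from $(\gamma,S)$) and the bookkeeping that permits $\gamma\in S$, both of which the paper leaves implicit.
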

From the two equalities, we have
$$b=-\sum_{\{\beta_{1},\beta_{2}\}\subset S}\beta_{1}\beta_{2}-(a-\sum_{\beta\in S}\beta)\sum_{\beta\in S}\beta.
$$
Hence, to search all such $f$ that does not define a deep hole, we need to find all $a\in \f$ and $b\in \f$ such that the equation
\begin{displaymath}
\left\{\begin{array}{rcl}
b & = & -\sum_{1\leqslant i < j\leqslant k+1}X_{i}X_{j}+(\sum_{i=1}^{k+1}X_{i}-a)\sum_{i=1}^{k+1}X_{i},\\
X_{i}&\neq & X_{j},\qquad \textrm{for all $i\neq j$,}
\end{array}\right.
\end{displaymath}
i.e.,
\begin{displaymath}
\left\{\begin{array}{rcl}
b & = & \sum_{1\leqslant i < j\leqslant k+1}X_{i}X_{j}+\sum_{i=1}^{k+1}X_{i}^{2}-a\sum_{i=1}^{k+1}X_{i},\\
X_{i}&\neq & X_{j},\qquad \textrm{for all $i\neq j$}
\end{array}\right.
\end{displaymath}
has solutions in $\f$.

Indeed, we will show that the equation always has solutions for any $a,b\in\f$. Let $t=k+1$, so we have
\begin{thm}\label{main}
For any $a,b\in\f$, for all $3\leqslant t\leqslant q-2$, the equation
\begin{displaymath}
\left\{\begin{array}{rcl}
b & = & \sum_{1\leqslant i < j\leqslant t}X_{i}X_{j}+\sum_{i=1}^{t}X_{i}^{2}-a\sum_{i=1}^{t}X_{i},\\
X_{i}&\neq & X_{j},\qquad \textrm{for all $i\neq j$}
\end{array}\right.
\end{displaymath}
has solutions in $\f$.
\end{thm}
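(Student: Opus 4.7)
The plan is to use a complementary viewpoint to reduce the equation to a substantially simpler one, then solve the reduced equation by freezing most variables. Since $q > 5$ has odd characteristic, $\sum_{x \in \f} x = 0$ and $\sum_{x \in \f} x^2 = 0$. For a candidate $A = \{X_1, \ldots, X_t\}$ with complement $B = \f \setminus A$ of size $s := q - t$, writing $\sigma_1(\cdot)$ and $\sigma_2(\cdot)$ for the usual elementary symmetric polynomials $\sum x$ and $\sum_{x<y} xy$ on a set, one has $\sigma_1(A) + \sigma_1(B) = 0$ and $\sigma_2(A) + \sigma_1(A)\sigma_1(B) + \sigma_2(B) = 0$. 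Using also $\sum X_i^2 = \sigma_1(A)^2 - 2\sigma_2(A)$ to rewrite the original equation as $\sigma_1(A)^2 - \sigma_2(A) - a\sigma_1(A) = b$, these complement identities transform it into
\[
\sigma_2(B) + a\sigma_1(B) = b, \qquad 2 \le s \le q - 3.
\]
Writing $B = \{Y_1, \ldots, Y_s\}$, the task becomes to find distinct $Y_i \in \f$ with $\sum_{i<j} Y_i Y_j + a\sum_i Y_i = b$---significantly simpler, as the squared-sum term has disappeared.

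For $s = 2$ the reduced equation factors as $(Y_1 + a)(Y_2 + a) = a^2 + b$, trivially admitting distinct solutions when $q > 5$ (pick $Y_1 \in \f \setminus \{-a\}$ with $(Y_1+a)^2 \ne a^2+b$ and let $Y_2 = (a^2+b)/(Y_1+a) - a$). For $s \ge 3$, I would freeze $Y_2, \ldots, Y_s$ to be $s - 1$ distinct elements of $\f$, set $C = \sum_{i \ge 2} Y_i$ and $D = \sum_{2 \le i < j} Y_i Y_j$, and observe that the equation is linear in $Y_1$:
\[
Y_1 (C + a) = b - D - aC.
\]
Thus $Y_1$ is uniquely determined whenever $C + a \ne 0$, and the problem reduces to selecting $(Y_2, \ldots, Y_s)$ distinct with $C + a \ne 0$ such that the resulting $Y_1$ is distinct from each $Y_j$.

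The main obstacle is ensuring this final distinctness, especially as $s$ approaches $q - 3$. For $s$ in a moderate range, $C + a = 0$ and $Y_1 = Y_j$ are each polynomial constraints of degree $\le 2$ in $(Y_2, \ldots, Y_s)$, excluding only an $O(s/q)$ fraction of all distinct tuples, so a direct Schwartz--Zippel-style count yields a valid configuration. For $s$ close to $q - 3$ (equivalently $t$ small), the complement-of-complement identity returns the problem to the original equation on a small set $A$, and one instead freezes $X_3, \ldots, X_t$ and solves for $X_1, X_2$ via the condition that the discriminant
\[
\Delta(Y) = -3Y^2 - 4(T - a)Y + 2(a^2 + 2b - R) - 2(T - a)^2,
\]
with $Y = X_1 + X_2$, $T = \sum_{i \ge 3} X_i$, $R = \sum_{i \ge 3} X_i^2$, is a nonzero square; this is a conic equation in two variables, and the standard theory of quadratic forms over $\f$ furnishes $q \pm 1$ affine points on any non-degenerate conic, resolving the case for $q > 5$ after a short case analysis of the degenerate configurations (e.g.\ when $p = 3$ or when $\Delta$ happens to be identically zero for the chosen frozen data, each escaped by perturbing one $X_i$).
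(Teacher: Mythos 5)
Your complementation identity is correct and is actually slicker than the paper's two-step reduction: writing $\sum X_i^2=\sigma_1(A)^2-2\sigma_2(A)$ and using $\sigma_1(\f)=\sigma_2(\f)=0$, the original equation on $A$ does become $\sigma_2(B)+a\sigma_1(B)=b$ on the complement $B$, absorbing both the square term and the linear term at once (the paper instead removes the linear term by the shift of Lemma~\ref{red}, which needs $p\nmid t+1$, and then applies Lemma~\ref{opp}); your $s=2$ case and the discriminant formula in the small-$t$ branch also check out. The genuine gap is coverage: your two production arguments do not meet, and the range they miss has length proportional to $q$. In the linear-in-$Y_1$ step the determined $Y_1$ must avoid the frozen set $\{Y_2,\dots,Y_s\}$, and the ``Schwartz--Zippel-style'' count you invoke does not give an $O(s/q)$ bad fraction among \emph{distinct-coordinate} tuples: the bound $O(sq^{s-2})$ on the bad variety must be compared with $q(q-1)\cdots(q-s+2)$, which is far smaller than $q^{s-1}$ once $s\gg\sqrt{q}$. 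The correct count (fix all but one frozen coordinate; each bad condition has degree $\leqslant 2$ in it) gives a bad fraction of order $2(s-1)/(q-s+2)$, so this branch works only for $s$ up to roughly $q/3$, i.e.\ $t\gtrsim 2q/3$. Symmetrically, in the conic branch each frozen $X_i$ rules out up to two of the roughly $q/2$ values of $Y$ with $\Delta(Y)$ a nonzero square, so it needs roughly $t\lesssim q/4$. Nothing in your outline handles the window $q/4\lesssim t\lesssim 2q/3$.

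That missing middle range is exactly where the paper stops being elementary: it invokes Theorem~5.3 of Li--Wan via Proposition~\ref{liwan2} and Theorem~\ref{dual}, a character-sum/sieve estimate showing that for $q\geqslant 257$ and $q/7<t<6q/7$ one can prescribe $\sigma_1$ and $\sigma_2$ simultaneously with distinct coordinates, and it verifies the primes $q<257$ by computer. Its elementary pieces (the induction of Lemma~\ref{thm} after complementation, covering $t\geqslant(q+1)/2$, and the quadratic-character induction of part~(b), covering $3\leqslant t<(q-11)/6$) play the same role as your two branches and likewise leave a gap that only the Li--Wan input closes. So to complete your argument you need either such an equidistribution ingredient for $\sigma_2(B)+a\sigma_1(B)=b$ in the middle range, or a new idea; even upgrading your large-$t$ branch to an induction on $s$ in the reduced equation (in the style of Lemma~\ref{thm}, which only has to avoid about $2s$ elements and so reaches $s\leqslant(q-1)/2$) would only get you down to $t\geqslant(q+1)/2$, still disjoint from $t\lesssim q/4$. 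The remaining issues you flag (identically vanishing degree-$\leqslant 2$ conditions, $p=3$, degenerate conics, small $q$) are repairable by perturbation or direct checking, but the middle range is a substantive hole in the proposal.
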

The proof we give is highly nontrivial, so we present the proof in
the Appendix.

Since $\deg(f)=k+2\leqslant n-1=q-1$, we only consider $k\leqslant q-3$. By Theorem~\ref{main}, we obtain the main result of this section.
\begin{thm}
Let $\f$ be a finite field of $q>5$ elements and $GRS_{k}(\f,\f)$ the standard
Reed-Solomon code. For all $2\leqslant k\leqslant q-3$, the polynomials $f=ux^{k+2}+ax^{k+1}+bx^{k}+f_{\leqslant k-1}(x)\in \f[x]$ ($u\in \f^*,a,b\in\f$) do not define deep holes.
\end{thm}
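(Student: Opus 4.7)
The plan is to reduce the problem to the monic case handled by the preceding lemma and then invoke Theorem~\ref{main}. First I would use the fact that the deep hole property is invariant under multiplication of $f$ by a nonzero scalar $c \in \f^{*}$: since the code $C$ is a linear subspace, $d(cu_{f},C) = d(u_{f},cC) = d(u_{f},C)$, so $u_{f}$ is a deep hole if and only if $u_{cf}$ is. Multiplying $f$ by $u^{-1}$, we may thus assume $u=1$. Then relabeling the coefficients by setting $a' = -a/u$ and $b' = -b/u$, the polynomial takes the exact shape $f = x^{k+2}-a'x^{k+1}-b'x^{k}+f_{\leqslant k-1}(x)$ required by the lemma preceding Theorem~\ref{main}.

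Next I would apply that lemma: $f$ fails to define a deep hole if and only if there exist a $(k+1)$-subset $S\subseteq \f$ and a $\gamma\in \f$ realizing the prescribed values of $a'$ and $b'$. Eliminating $\gamma$ from the two equations, this is exactly the statement that the system of Theorem~\ref{main} with $t=k+1$ admits a solution $(X_{1},\dots,X_{k+1})$ of pairwise distinct elements in $\f$ (the $k+1$ components playing the role of the elements of $S$).

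Finally, the hypothesis $2\leqslant k\leqslant q-3$ translates to $3\leqslant t=k+1\leqslant q-2$, which is the exact range covered by Theorem~\ref{main}. Hence for every choice of $a',b'\in \f$ the required solution exists, so no polynomial of the given form can define a deep hole. This concludes the proof modulo Theorem~\ref{main}.

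The only real obstacle is Theorem~\ref{main} itself, which is where all the genuine combinatorial and arithmetic work happens; the present theorem is essentially its corollary, obtained by a rescaling argument and the lemma that rewrites the ``not-a-deep-hole'' condition as an existence statement for the quadratic-type system.
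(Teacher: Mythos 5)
Your proposal is correct and follows essentially the same route as the paper: reduce to the monic case (the paper leaves the rescaling by $u^{-1}$ implicit), use the lemma characterizing ``not a deep hole'' via the existence of the set $S$ and the extra root $\gamma$, eliminate $\gamma$, and invoke Theorem~\ref{main} with $t=k+1$ in the range $3\leqslant t\leqslant q-2$. Nothing is missing beyond Theorem~\ref{main} itself, which the paper likewise treats as the real content.
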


\section{Conclusion}
The result of Section~4 supports the conjecture of Cheng and
Murray~\cite{C.M} for extended Reed-Solomon codes. And it is easy to
see that as the evaluation set $D$ becomes small, there will be more
deep holes for the Reed-Solomon code $GRS_{k}(\f,D)$. In particular,
in Section~2 we have seen that there is a new class of deep holes
whenever there is an element $a\in\f\setminus D$. For finite fields
$\f$ with cardinality $q>5$ and $D\subseteq \f$ with cardinality $q$
or $q-1$, Li and Wan~\cite{L.W1} proved that there is no deep hole
of $GRS_{k}(\f,D)$ defined by polynomials of degree $k+1$ when
$2<k<q-3$. In Section~4, we have seen that if the finite field $\f$
has odd characteristic, there is also no deep hole of
$GRS_{k}(\f,\f)$ defined by polynomials of degree $k+2$ when
$2\leqslant k\leqslant q-3$. Similarly, in this case, we can prove
that there is also no deep hole of $GRS_{k}(\f,\f^*)$ defined by
polynomials of degree $k+2$ when $2\leqslant k<q-4$.

\section*{Acknowledgements}
%The authors thank Prof. Wan D.Q. for a lot of comments.

The first two authors are supported by the National
Science Foundation of China (Nos. 61171082, 10990011, 60872025). The third author is supported by the National
Science Foundation of China (No. 10990011), the Ph.D. Programs
Foundation of Ministry of Education of China (No. 20095134120001) and Sichuan Provincial Advance Research
Program for Excellent Youth Leaders of Disciplines in Science of
China (No. 2011JQ0037).

%------------reference---------------------------------

\appendix
\section{Appendix}
To complete this paper, we give the proof of Theorem~\ref{main} in this section. We only prove the theorem for the case that $q=p>2$ a prime integer. In the proof, we also need to assume $q\geqslant257$. For prime $q<257$, we can use the computer to check it.

 For general prime power $q$, the proof is similar. Note that the reduction in Lemma~\ref{red} is not valid when the characteristic $p|(t+1)$. But for small $t$ (i.e., $t<c_{1}q$ for some constant $c_{1}$), the method in the proof of Lemma~\ref{thm} (and in the part (b) of the proof for Theorem~\ref{main}) still works; for large $t$ (i.e., $t>c_{2}q$ for some constant $c_{2}$), using Lemma~\ref{opp}, it is enough to consider the complement set $\f\setminus\{X_{1},\cdots,X_{t}\}$. Together with the required version of Theorem~\ref{dual} (taking $c=1/2$ and a proper $\epsilon$ in Theorem~5.3 in~\cite{liwan2}), we can finish the proof of Theorem~\ref{main}.

First, we give some lemmas. Due to the next lemma, it is reduced to the case $a=0$.
\begin{lem}\label{red}
For any $a,b\in\f$, the equation
\begin{displaymath}
\left\{\begin{array}{rcl}
b & = & \sum_{1\leqslant i < j\leqslant t}X_{i}X_{j}+\sum_{i=1}^{t}X_{i}^{2}-a\sum_{i=1}^{t}X_{i},\\
X_{i}&\neq & X_{j},\qquad \textrm{for all $i\neq j$}
\end{array}\right.
\end{displaymath}
has solutions in $\f$, if and only if for any $\beta\in\f$, the equation
\begin{displaymath}
\left\{\begin{array}{rcl}
\beta & = & \sum_{1\leqslant i < j\leqslant t}X_{i}X_{j}+\sum_{i=1}^{t}X_{i}^{2},\\
X_{i}&\neq & X_{j},\qquad \textrm{for all $i\neq j$}
\end{array}\right.
\end{displaymath}
has solutions in $\f$.
\end{lem}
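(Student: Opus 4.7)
\medskip

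\noindent\emph{Proof plan.}\; The plan is to establish the equivalence by a single linear change of variables $X_i = Y_i + c$ for a suitably chosen $c\in\f$. The forward implication is essentially free: taking $a=0$ and $b=\beta$ turns the $X$-equation verbatim into the $Y$-equation, so if the former has a solution for every pair $(a,b)$, then the latter has one for every $\beta$.

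For the reverse direction, I will expand each symmetric function under the substitution $X_i = Y_i + c$. Using only the elementary identity $\sum_{i<j}(Y_i+Y_j)=(t-1)\sum_i Y_i$, a short computation (which I leave implicit here) yields
\[
\sum_{i=1}^t X_i^2 + \sum_{1\leqslant i<j\leqslant t} X_i X_j - a\sum_{i=1}^t X_i
= \sum_{i=1}^t Y_i^2 + \sum_{1\leqslant i<j\leqslant t} Y_i Y_j + \bigl[(t+1)c-a\bigr]\sum_{i=1}^t Y_i + \binom{t+1}{2}c^{2} - atc.
\]
The pivotal step is to choose $c=a/(t+1)$, which annihilates the linear-in-$Y_i$ term on the right-hand side. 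Since the characteristic is odd, $2$ is invertible, and the remaining constant collapses to $-\frac{a^2 t}{2(t+1)}$. Setting $\beta = b+\frac{a^2 t}{2(t+1)}$ therefore converts the $X$-equation into the $Y$-equation at this value of $\beta$.

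Finally, the affine map $Y_i\mapsto Y_i+c$ is a bijection of $\f$ that preserves distinctness, so $(Y_1,\ldots,Y_t)$ is a distinct solution of the $Y$-equation at parameter $\beta$ if and only if $(X_1,\ldots,X_t)=(Y_1+c,\ldots,Y_t+c)$ is a distinct solution of the $X$-equation at parameters $(a,b)$. Quantifying over all $\beta\in\f$ on the $Y$-side thus yields a solution for every $(a,b)\in\f^{2}$ on the $X$-side, completing the reverse implication.

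The only genuine restriction of this strategy, and what I expect to be the main (indeed, the only) obstacle, is the arithmetic condition $p\nmid(t+1)$, without which the normalizing choice $c=a/(t+1)$ is illegal. This is precisely the caveat flagged at the opening of the appendix: for those exceptional $t$ the linear reduction is unavailable and one must argue directly---via the method used for small $t$ in the proof of Theorem~\ref{main}, or via the complement-set trick (passing to $\f\setminus\{X_1,\ldots,X_t\}$) for large $t$. Within the range to which the reduction applies, the substitution sketched above constitutes the entire proof.
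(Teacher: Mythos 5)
Your proposal is correct and follows essentially the same route as the paper: the trivial direction via $a=0$, and for the converse the shift $X_i=Y_i+c$ with $c=a/(t+1)$ applied to a solution of the $\beta$-equation at $\beta=b+\frac{ta^{2}}{2(t+1)}$, with distinctness preserved by the affine bijection. Your closing caveat about $p\mid(t+1)$ is exactly the restriction the paper itself flags at the start of the appendix (and is harmless there since the proof is given for $q=p$ prime with $t\leqslant q-2$).
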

\begin{proof}The necessity part holds obviously.

For the other direction, for any $a,b\in\f$, taking $\beta=\frac{ta^{2}}{2(t+1)}+b$, suppose $(x_{1},\cdots,x_{t})\in\f^t$ is a solution of the equation
\begin{displaymath}
\left\{\begin{array}{rcl}
\beta & = & \sum_{1\leqslant i < j\leqslant t}X_{i}X_{j}+\sum_{i=1}^{t}X_{i}^{2},\\
X_{i}&\neq & X_{j},\qquad \textrm{for all $i\neq j$.}
\end{array}\right.
\end{displaymath}
Set $c=\frac{a}{t+1}$, then
\begin{displaymath}
\begin{array}{cl}
 & \sum_{1\leqslant i < j\leqslant t}(x_{i}+c)(x_{j}+c)+\sum_{i=1}^{t}(x_{i}+c)^{2}-a\sum_{i=1}^{t}(x_{i}+c)\\
=& \sum_{1\leqslant i < j\leqslant t}x_{i}x_{j}+\sum_{i=1}^{t}x_{i}^{2}+((t+1)c-a)\sum_{i=1}^{t}x_{i}-tac+\frac{t(t+1)}{2}c^{2}\\
=&b.
\end{array}
\end{displaymath}
 So $(x_{1}+c,\cdots,x_{t}+c)\in\f^t$ is a solution of the equation
 \begin{displaymath}
\left\{\begin{array}{rcl}
b & = & \sum_{1\leqslant i < j\leqslant t}X_{i}X_{j}+\sum_{i=1}^{t}X_{i}^{2}-a\sum_{i=1}^{t}X_{i},\\
X_{i}&\neq & X_{j},\qquad \textrm{for all $i\neq j$.}
\end{array}\right.
\end{displaymath}
\end{proof}

\begin{lem}\label{opp}
For any $S\subset\f$ with cardinality $2\leqslant |S|\leqslant q-2$, set $S'=\f\setminus S$. Then
\[
  \sum_{\{\beta_{1},\beta_{2}\}\subset S}\beta_{1}\beta_{2}+\sum_{\beta\in S}\beta^2=\sum_{\{\gamma_{1},\gamma_{2}\}\subset S'}\gamma_{1}\gamma_{2}.
\]
\end{lem}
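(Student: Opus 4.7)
The plan is to exploit the polynomial identity $\prod_{x\in\f}(T-x) = T^q - T$ applied to the partition $\f = S \sqcup S'$. Define $P_S(T) = \prod_{\beta\in S}(T-\beta)$ and $P_{S'}(T) = \prod_{\gamma\in S'}(T-\gamma)$, so that $P_S(T)\,P_{S'}(T) = T^q - T$. The coefficients of $P_S$ (resp.\ $P_{S'}$) are, up to sign, the elementary symmetric polynomials $e_i(S) := \sum_{\{\beta_1,\ldots,\beta_i\}\subset S}\beta_1\cdots\beta_i$ (resp.\ $e_i(S')$). The cardinality hypothesis $2\leqslant|S|\leqslant q-2$ forces $|S|,|S'|\geqslant 2$ and in particular $q\geqslant 4$, which is exactly what is needed so that $e_2(S)$ and $e_2(S')$ appear as genuine coefficients of $P_S$ and $P_{S'}$, and so that $q-2\notin\{0,1\}$.

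First I would compare coefficients of $T^{q-1}$ on both sides of $P_S\,P_{S'} = T^q - T$: the right side contributes $0$, and the left side gives $-\bigl(e_1(S) + e_1(S')\bigr)$, yielding $e_1(S') = -e_1(S)$. Next I would compare coefficients of $T^{q-2}$. Since $q\geqslant 4$ the right-hand coefficient is still $0$, and extracting the three cross-terms in $P_S\,P_{S'}$ that land in degree $q-2$ produces
\[
 e_2(S) + e_1(S)\,e_1(S') + e_2(S') = 0.
\]
Substituting $e_1(S') = -e_1(S)$ immediately gives $e_2(S) + e_2(S') = e_1(S)^2$.

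To finish, I would invoke the purely formal identity $e_1(S)^2 = \sum_{\beta\in S}\beta^2 + 2\,e_2(S)$, obtained by expanding $\bigl(\sum_{\beta\in S}\beta\bigr)^2$ and separating diagonal from off-diagonal terms. Crucially this identity involves no division by $2$, so it holds in arbitrary characteristic. Plugging it into the relation from the previous step gives
\[
 e_2(S') \;=\; e_1(S)^2 - e_2(S) \;=\; \sum_{\beta\in S}\beta^2 + e_2(S),
\]
which is precisely the claimed identity.

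There is no real obstacle; the only care needed is the bookkeeping when extracting the coefficient of $T^{q-2}$ from $P_S\,P_{S'}$, where one must check that the three contributions $e_2(S)\cdot 1$, $(-e_1(S))(-e_1(S'))$, and $1\cdot e_2(S')$ are the only ones — and this is exactly where the hypotheses $|S|\geqslant 2$ and $|S'|\geqslant 2$ enter. The argument is characteristic-free, which fits the way Lemma~\ref{opp} is applied later in the appendix both when $q$ is an odd prime and in the extension to general prime powers including even characteristic.
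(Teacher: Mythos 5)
Your proof is correct. It reaches the same identity by a slightly different packaging than the paper: the paper starts from the single vanishing sum $\sum_{\{\beta_{1},\beta_{2}\}\subset \f}\beta_{1}\beta_{2}+\sum_{\beta\in \f}\beta^{2}=0$ and splits it directly over the partition $\f=S\sqcup S'$, using $\sum_{\beta\in S}\beta=-\sum_{\gamma\in S'}\gamma$ to turn the cross term $\sum_{\beta\in S,\gamma\in S'}\beta\gamma$ into $-\bigl(\sum_{\gamma\in S'}\gamma\bigr)^{2}$ and then absorbing it via $e_{1}(S')^{2}=\sum_{\gamma\in S'}\gamma^{2}+2e_{2}(S')$; you instead extract the relations $e_{1}(S)+e_{1}(S')=0$ and $e_{2}(S)+e_{1}(S)e_{1}(S')+e_{2}(S')=0$ by comparing the $T^{q-1}$ and $T^{q-2}$ coefficients in $P_{S}P_{S'}=T^{q}-T$, and then apply the same characteristic-free identity $e_{1}^{2}=p_{2}+2e_{2}$ on the $S$ side. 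The algebraic content is essentially identical, but your generating-polynomial route has the small advantage of making explicit where the hypothesis $q\geqslant 4$ (equivalently $2\leqslant |S|\leqslant q-2$) is used, namely that the $T^{q-2}$ coefficient of $T^{q}-T$ vanishes only for $q\neq 3$ -- a point the paper passes over silently when it asserts its opening identity. Your coefficient bookkeeping for $T^{q-2}$ is the only delicate step and it is handled correctly.
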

\begin{proof}
 We have
 \begin{displaymath}
\begin{array}{rcl}
0& =  & \sum_{\{\beta_{1},\beta_{2}\}\subset \f}\beta_{1}\beta_{2}+\sum_{\beta\in \f}\beta^2 \\
 & = & (\sum_{\{\beta_{1},\beta_{2}\}\subset S}\beta_{1}\beta_{2}+\sum_{\beta\in S}\beta^2)+(\sum_{\{\gamma_{1},\gamma_{2}\}\subset S'}\gamma_{1}\gamma_{2}+\sum_{\gamma\in S'}\gamma^2)\\
 &  & +\sum_{\beta\in S,\gamma\in S'}\beta\gamma\\
 & = & (\sum_{\{\beta_{1},\beta_{2}\}\subset S}\beta_{1}\beta_{2}+\sum_{\beta\in S}\beta^2)+(\sum_{\{\gamma_{1},\gamma_{2}\}\subset S'}\gamma_{1}\gamma_{2}+\sum_{\gamma\in S'}\gamma^2)\\
  &&+(\sum_{\gamma\in S'}\gamma)(-\sum_{\gamma\in S'}\gamma)\\
  & = & \sum_{\{\beta_{1},\beta_{2}\}\subset S}\beta_{1}\beta_{2}+\sum_{\beta\in S}\beta^2-\sum_{\{\gamma_{1},\gamma_{2}\}\subset S'}\gamma_{1}\gamma_{2}.
\end{array}
\end{displaymath}
So
\[
  \sum_{\{\beta_{1},\beta_{2}\}\subset S}\beta_{1}\beta_{2}+\sum_{\beta\in S}\beta^2=\sum_{\{\gamma_{1},\gamma_{2}\}\subset S'}\gamma_{1}\gamma_{2}.
\]
\end{proof}
By this lemma, it is enough to consider the equation
\begin{displaymath}
\left\{\begin{array}{rcl}
b & = & \sum_{1\leqslant i < j\leqslant t}X_{i}X_{j},\\
X_{i}&\neq & X_{j},\qquad \textrm{for all $i\neq j$.}
\end{array}\right.
\end{displaymath}
\begin{lem}\label{thm}
For any $b\in\f$ and $2\leqslant t\leqslant \frac{q-1}{2}$, the equation
\begin{displaymath}
\left\{\begin{array}{rcl}
b & = & \sum_{1\leqslant i < j\leqslant t}X_{i}X_{j},\\
X_{i}&\neq & X_{j},\qquad \textrm{for all $i\neq j$}
\end{array}\right.
\end{displaymath}
has solutions in $\f$.
\end{lem}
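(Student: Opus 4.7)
The plan is to restrict the search to arithmetic progressions, reducing the lemma to an elementary counting in the affine plane $\f^{2}$. Given $(a,d) \in \f^{2}$, set $X_{i} := a + (i-1)d$ for $i = 1, \ldots, t-1$, and write
\[
 E(a,d) := \sum_{1 \leqslant i < j \leqslant t-1} X_{i} X_{j}, \qquad S(a,d) := \sum_{i=1}^{t-1} X_{i} = (t-1) a + \binom{t-1}{2} d,
\]
so that $S$ is a linear form and $E$ is a homogeneous quadratic form in $(a,d)$. The identity
\[
 \sum_{1 \leqslant i < j \leqslant t} X_{i} X_{j} \;=\; E(a,d) + S(a,d) \cdot X_{t}
\]
shows that once $S(a,d) \neq 0$, the choice $X_{t} := (b - E(a,d))/S(a,d)$ makes the target equation hold automatically, and it then suffices to find $(a,d)$ for which the resulting $t$-tuple $(X_{1}, \ldots, X_{t})$ has pairwise distinct coordinates.

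Call $(a,d)$ \emph{bad} if any of the following fails: (i) $d \neq 0$, required so that $X_{1}, \ldots, X_{t-1}$ are distinct when $t \geqslant 3$ (using $t-1 < p$ to rule out collisions inside the AP); (ii) $S(a,d) \neq 0$, required so that $X_{t}$ is defined; (iii) $X_{t}(a,d) \neq X_{j}(a,d)$ for every $j \in \{1, \ldots, t-1\}$. Conditions (i) and (ii) each cut out a line in $\f^{2}$ and contribute at most $q$ bad points apiece (for (ii), $S$ is nonzero as a linear form since $t-1 \not\equiv 0 \pmod{p}$). For (iii), the condition $X_{t} = X_{j}$, after clearing the denominator $S$, becomes the polynomial equation
\[
 b - E(a,d) - S(a,d)\, X_{j}(a,d) \;=\; 0,
\]
which has total degree at most $2$ in $(a,d)$. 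A direct expansion shows that its $a^{2}$-coefficient equals $-\binom{t-1}{2} - (t-1) = -t(t-1)/2$, and this is nonzero in $\f$ because $2 \leqslant t \leqslant (q-1)/2 < p$ with $p$ odd. Hence for each fixed $d$ the left-hand side is a genuine quadratic in $a$ with at most two roots, contributing at most $2q$ bad points per $j$ and at most $2q(t-1)$ in total from (iii).

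Summing the three bounds gives at most $q + q + 2q(t-1) = 2qt$ bad pairs. Since $t \leqslant (q-1)/2$ forces $2qt \leqslant q(q-1) < q^{2}$, the number of good $(a,d)$ is at least $q^{2} - 2qt \geqslant q > 0$. Any good pair produces distinct $X_{1}, \ldots, X_{t} \in \f$ satisfying $\sum_{i<j} X_{i} X_{j} = b$, completing the proof. The argument rests on two arithmetic inputs, both immediate from the hypothesis $2 \leqslant t \leqslant (q-1)/2$: that $t(t-1)/2 \not\equiv 0 \pmod{p}$ (so the polynomial in (iii) is truly quadratic in $a$) and that $2qt < q^{2}$ (giving enough room for good pairs to exist). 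No substantial obstacle is expected; the main conceptual step is simply noticing that isolating $X_{t}$ linearly reduces a cubic-looking constraint on $(X_{1}, \ldots, X_{t})$ to a quadratic one in $(a,d)$.
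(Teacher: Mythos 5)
Your proof is correct, and it takes a genuinely different route from the paper. The paper proceeds by induction on $t$: starting from a solution for $t$, it either appends $0$ (if $0$ is not already used) or replaces one coordinate $x_t$ by a new $x'_t$ and solves linearly for an extra coordinate $x=\frac{(x_t-x'_t)(x_t-a)}{x_t-x'_t-a}$, then counts the bad choices of $x'_t$, which is where the hypothesis $t\leqslant\frac{q-1}{2}$ enters; it needs the base cases $t=2,3$ and uses nothing about the characteristic beyond $p$ being odd. Your argument instead parametrizes $X_1,\dots,X_{t-1}$ as an arithmetic progression in $(a,d)\in\f^2$, solves linearly for $X_t=(b-E)/S$, and bounds the bad locus by $2qt<q^2$; the coefficient computation $-\binom{t-1}{2}-(t-1)=-\frac{t(t-1)}{2}$ and the counting are correct, and the construction is direct (no induction, no base cases) and yields explicit solutions. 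What the paper's induction buys in exchange is generality in the characteristic: your argument needs $t-1\not\equiv 0$, $t(t-1)/2\not\equiv 0 \pmod p$ and distinctness inside the progression, i.e.\ essentially $t<p$. This is harmless in the appendix's declared setting $q=p>2$ prime (where $t\leqslant\frac{q-1}{2}<p$), but it would not carry over verbatim to the prime-power adaptation the paper sketches, where $t$ may exceed $p$; the paper's inductive proof of this lemma applies to any odd prime power without change. You might note this restriction explicitly if you intend your proof as a drop-in replacement.
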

\begin{proof}
We prove the statement by induction on $t$.

For $t=2,3$, one can easily check it.

 Now we assume that the result is true for $t\geqslant 3$.

  Then for $t+1$, by induction hypothesis, suppose the equation
\[
   \sum_{1\leqslant i < j\leqslant t}X_{i}X_{j}=b
\]
has a solution $(x_{1},\cdots,x_{t})\in \f^{t}$ such that $x_{i}\neq x_{j}$ for any $i\neq j$.

If $x_{i}\neq 0$ for all $1\leqslant i\leqslant t$, then $(0,x_{1},\cdots,x_{t})\in \f^{t+1}$ is a solution of the equation
\begin{displaymath}
\left\{\begin{array}{rcl}
b & = & \sum_{1\leqslant i < j\leqslant t+1}X_{i}X_{j},\\
X_{i}&\neq & X_{j},\qquad \textrm{for all $i\neq j$.}
\end{array}\right.
\end{displaymath}
Then we finish the induction procedure.

Otherwise, without loss of generality, we assume $x_{1}=0$.
 We want to find some $x'_{t}\in \f\setminus \{x_{1},\cdots,x_{t}\}$ and $x\in  \f\setminus \{x_{1},\cdots,x_{t-1},x'_{t}\}$ such that $(x_{1},\cdots,x_{t-1},x'_{t},x)$ forms a solution of the following equation
\[
   \sum_{ 1\leqslant i < j\leqslant t+1}X_{i}X_{j}=b.
\]
Denote $a=x_{1}+\cdots+x_{t}$. Because permutations of $(x_{1},\cdots,x_{t})$ are also solutions of the equation and $t\geqslant 3$, we may assume $x_{t}\neq a$. Thus
\begin{displaymath}
\left\{\begin{array}{rcl}
a^{2}-2b & = & \sum_{i=1}^{t}x_{i}^2,\\
a&= & x_{1}+\cdots+x_{t}.
\end{array}\right.
\end{displaymath}
It induces
\begin{displaymath}
\left\{\begin{array}{rcl}
a^{2}-2x'_{t}(x_{t}-x'_{t})-(x_{t}-x'_{t})^{2}+x^{2}-2b & = & \sum_{i=1}^{t-1}x_{i}^2+(x'_{t})^{2}+x^{2},\\
a-(x_{t}-x'_{t})+x&= & x_{1}+\cdots+x'_{t}+x.
\end{array}\right.
\end{displaymath}
To find solutions of the equation
\begin{displaymath}
\left\{\begin{array}{rcl}
b & = & \sum_{1\leqslant i < j\leqslant t+1}X_{i}X_{j},\\
X_{i}&\neq & X_{j},\qquad \textrm{for all $i\neq j$}
\end{array}\right.
\end{displaymath}
 in $\f$, it suffices to show the existences of $x'_{t}\in \f\setminus \{x_{1},\cdots,x_{t}\}$ and $x\in  \f\setminus \{x_{1},\cdots,x_{t-1},x'_{t}\}$
which satisfy
\[
  (a-(x_{t}-x'_{t})+x)^{2}=a^{2}-2x'_{t}(x_{t}-x'_{t})-(x_{t}-x'_{t})^{2}+x^{2}.
\]
That is, the solution
\[
   x=\frac{(x_{t}-x'_{t})(x_{t}-a)}{x_{t}-x'_{t}-a}\in \f^{*}
\]
satisfies
\[
  x\in  \f^{*}\setminus \{x_{2},\cdots,x_{t-1},x'_{t}\}.
\]
If
\[
  x= \frac{(x_{t}-x'_{t})(x_{t}-a)}{x_{t}-x'_{t}-a}=x_{i}, \qquad\textrm{for some $i=2,\cdots,t-1$,}
\]
then
\[
   x'_{t}=\frac{(x_{t}-x_{i})(x_{t}-a)}{x_{t}-x_{i}-a}.
\]

If
\[
  x= \frac{(x_{t}-x'_{t})(x_{t}-a)}{x_{t}-x'_{t}-a}=x'_{t},
\]
then it has at most two solutions for $x'_{t}$, say $c_{1},c_{2}$.
So, we can pick any element from
\[
  \f^{*}\setminus \{x_{2},\cdots,x_{t},\frac{(x_{t}-x_{2})(x_{t}-a)}{x_{t}-x_{2}-a},\cdots,\frac{(x_{t}-x_{t-1})(x_{t}-a)}{x_{t}-x_{t-1}-a},c_{1},c_{2}\}
\]
for $x'_{t}$. And under the condition
\[
    q-1>t-1+t-2+2, \qquad\textrm{i.e., $t\leqslant \frac{q-1}{2}$,}
\]
 the equation
\begin{displaymath}
\left\{\begin{array}{rcl}
b & = & \sum_{1\leqslant i < j\leqslant t+1}X_{i}X_{j},\\
X_{i}&\neq & X_{j},\qquad \textrm{for all $i\neq j$},
\end{array}\right.
\end{displaymath}
always has solutions in $\f$.
\end{proof}

 we calculate a sum of quadratic character of a finite field.
\begin{lem}\label{charsum}
Let $q$ be an odd prime power and $\f$ a finite field of $q$ elements. For any $c\in \f$, we have
\begin{displaymath}
\left|\sum_{x\in \f} \eta(x^2+c)\right| = \left\{ \begin{array}{ll}
q-1, & \textrm{if $c=0$,}\\
3, & \textrm{if $\eta(-1)=\eta(c)=-1$,}\\
1, & \textrm{otherwise.}
\end{array} \right.
\end{displaymath}
where $\eta$ is the quadratic character of $\f$, i.e.,
\begin{displaymath}
\eta(x) = \left\{ \begin{array}{ll}
1, & \textrm{if $x$ is a square,}\\
-1, & \textrm{if $x$ is not a square,}\\
0, & \textrm{if $x=0$.}
\end{array} \right.
\end{displaymath}
\end{lem}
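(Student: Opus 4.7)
My plan is to split the proof into the three cases distinguished in the statement. For $c=0$ the computation is immediate: $\eta(x^2)=1$ for every $x\in\f^*$ and $\eta(0)=0$, so the sum equals $q-1$.

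For $c\neq 0$, I would reindex using the standard preimage identity $\#\{x\in\f:x^2=z\}=1+\eta(z)$ (with the convention $\eta(0)=0$). Substituting $y=x^2+c$ yields
\[
\sum_{x\in\f}\eta(x^2+c)=\sum_{y\in\f}\eta(y)\bigl(1+\eta(y-c)\bigr)=\sum_{y\in\f}\eta\bigl(y(y-c)\bigr),
\]
after using $\sum_y\eta(y)=0$. A further change of variable, for example $y=cu/(u-1)$ (valid for $u\neq 1$), converts this Jacobsthal-type sum into a linear character sum $\sum_u\eta(u)$ over a restricted index set, which can be read off from orthogonality of $\eta$. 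The case split is then forced by whether $-c$ is a square in $\f$, equivalently by whether $\eta(-1)=\eta(c)$: exactly in that case the polynomial $x^2+c$ vanishes at two points of $\f$, each contributing $\eta(0)=0$ to the original sum and altering the bookkeeping in the substitution.

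The main obstacle I anticipate is the refined case $\eta(-1)=\eta(c)=-1$. A direct application of the generic character-sum formula for a quadratic polynomial with nonzero discriminant gives only a value of magnitude $1$; to obtain the claimed value $3$ one must carefully track how the two vanishing contributions at the roots of $x^2+c$ interact with the excluded indices of the Jacobsthal substitution, and possibly supplement this with a more refined counting of pairs $\{x,-x\}$ mapping to the same value of $x^2+c$. This delicate accounting is where I expect the proof to be most technical. The remaining subcase, in which $\eta(-1)\neq\eta(c)$ so $x^2+c$ has no roots in $\f$, should follow directly from the same computation with no extra correction, giving the clean value $1$.
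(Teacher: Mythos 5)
Your reduction is the right one, and carried to completion it already settles the whole case $c\neq 0$ with no case distinction: from $\#\{x\in\f : x^{2}=z\}=1+\eta(z)$ you get $\sum_{x\in\f}\eta(x^{2}+c)=\sum_{y\in\f}\eta(y)\bigl(1+\eta(y-c)\bigr)=\sum_{y\in\f}\eta\bigl(y(y-c)\bigr)$, and for $y\neq 0$ one has $\eta\bigl(y(y-c)\bigr)=\eta(1-c/y)$ with $1-c/y$ running over $\f\setminus\{1\}$, so the sum equals $-\eta(1)=-1$. The two roots of $x^{2}+c$ (present exactly when $-c$ is a square, i.e.\ $\eta(-1)=\eta(c)$) contribute $\eta(0)=0$ and are already correctly absorbed in the identity $1+\eta(0)=1$; there is no residual correction to track. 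Consequently the ``delicate accounting'' you defer in the case $\eta(-1)=\eta(c)=-1$ cannot be made to work: the exact value there is also $-1$ (for instance $q=7$, $c=3$ gives $-1+1+0-1-1+0+1=-1$), so no refinement of the bookkeeping will ever produce $3$. The obstacle you anticipated is not a gap in your argument but an error in the statement itself: the value $3$ in the lemma should be $1$, i.e.\ $\bigl|\sum_{x\in\f}\eta(x^{2}+c)\bigr|=1$ for every $c\neq 0$.

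For comparison, the paper argues by counting the $q-1$ solutions of $y^{2}=x^{2}+c$ via $(y-x)(y+x)=c$, which is essentially the same computation as yours; but in the case $\eta(-1)=\eta(c)=-1$ it takes the number of $x$ with $x^{2}+c$ a nonsquare to be $q-\frac{q-3}{2}$ instead of $q-2-\frac{q-3}{2}=\frac{q-1}{2}$, forgetting to exclude the two roots of $x^{2}+c$; that slip is the source of the spurious $3$. Done correctly, both routes give $-1$. The later use of the lemma is unharmed: Corollary~\ref{sqrtnum} only needs an upper bound on this character sum, and replacing $3$ by $1$ only strengthens the conclusion $A\geqslant\frac{q-1}{4}$, so you should simply prove the corrected statement rather than chase the claimed value.
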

\begin{proof}The statement holds obviously for $c=0$.

Now we assume $c\neq 0$, consider the quadratic equation
\[
  x^{2}+c=y^{2},
\]
namely,
\[
 (y-x)(y+x)=c.
\]
It is easy to see that the equation has $(q-1)$ solutions
\[
   \left\{\left(\frac{b-c/b}{2},\frac{b+c/b}{2}\right)\,|\,b\in \f^{*}\right\}.
\]
If $\eta(-1)=\eta(c)=-1$, then $-c$ is a square. So there are two solutions for the equation $x^{2}+c=0$, this means that
\[
\left|\sum_{x\in \f} \eta(x^2+c)\right| = \left|\frac{q-1-2}{2}-(q-\frac{q-1-2}{2})+0+0\right|=3.
\]
For other cases, with the same argument, we finish the rests of this lemma.
\end{proof}
By this lemma, we can compute the number of squares in a finite field whose images under a fixed affine map are still squares. It is interesting that
squares are about one half elements in a finite field,  and about one half squares are still squares under the action of
 an affine map.
\begin{cor}\label{sqrtnum}
Let $q$ be an odd prime power, and $\f$ a finite field of $q$ elements. For any $a, c\in \f^{*}$, we have
\[
   A= \#\{x\in \f\,|\,\textrm{both $x$ and $ax+c$ are squares in $\f$}\}\geqslant \frac{q-1}{4}.
\]
\end{cor}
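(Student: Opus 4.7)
The plan is to estimate $A$ by a character-sum count of solutions to an associated quadratic curve. Let $\eta$ denote the quadratic character on $\f$ with the convention $\eta(0)=0$, and set
\[
N \;=\; \#\{(y,z)\in\f^{2}\,:\,z^{2}=ay^{2}+c\}.
\]
For each $y\in\f$ there are $1+\eta(ay^{2}+c)$ choices of $z$, so
\[
N \;=\; q+\sum_{y\in\f}\eta(ay^{2}+c) \;=\; q+\eta(a)\sum_{y\in\f}\eta\!\left(y^{2}+\tfrac{c}{a}\right) \;=\; q+\eta(a)\,T\!\left(\tfrac{c}{a}\right),
\]
where $T(d):=\sum_{y}\eta(y^{2}+d)$. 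Since $c/a\neq 0$, Lemma~\ref{charsum} yields $|T(c/a)|\leqslant 3$, and hence $N\geqslant q-3$.

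Next I would double count $N$ by projecting $(y,z)\mapsto x:=y^{2}$. Writing $r_{1}(x)=1+\eta(x)$ and $r_{2}(x)=1+\eta(ax+c)$, one has $N=\sum_{x}r_{1}(x)r_{2}(x)$, and an $x$ contributes to this sum exactly when $x\in A$. For $x\in A$ with $x\notin\{0,-c/a\}$ (the two exceptional values being distinct since $c\neq 0$) the contribution is $r_{1}(x)r_{2}(x)=4$; the point $x=0$ belongs to $A$ iff $\eta(c)=1$ and then contributes $r_{1}(0)r_{2}(0)=2$; and $x=-c/a$ belongs to $A$ iff $\eta(-c/a)=1$ and then contributes $2$ as well. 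Setting $I_{0}:=\mathbf{1}[\eta(c)=1]$ and $I_{1}:=\mathbf{1}[\eta(-c/a)=1]$, this yields
\[
N \;=\; 4A-2I_{0}-2I_{1},\qquad\text{so}\qquad A \;=\; \frac{N+2I_{0}+2I_{1}}{4}.
\]

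It therefore suffices to show $N+2(I_{0}+I_{1})\geqslant q-1$. If $I_{0}+I_{1}\geqslant 1$, this is immediate from $N\geqslant q-3$. The remaining case is $I_{0}=I_{1}=0$, i.e., $\eta(c)=\eta(-c/a)=-1$. Here I would exploit the multiplicativity identity $\eta(-c/a)=\eta(-1)\eta(c)\eta(a)$: combined with $\eta(c)=-1$ it forces $\eta(-1)\eta(a)=1$, and hence $\eta(c/a)=\eta(c)\eta(a)=-\eta(-1)$. Lemma~\ref{charsum} allows $|T(c/a)|=3$ only when $\eta(-1)=\eta(c/a)=-1$, but these two conditions together would give $\eta(-1)=1$, a contradiction. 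So in this remaining case $|T(c/a)|\leqslant 1$, yielding $N\geqslant q-1$ and therefore $A\geqslant (q-1)/4$.

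The main obstacle is exactly this final case analysis: the generic bound $|T|\leqslant 3$ is too weak on its own, so one has to verify that the worst case $|T|=3$ of Lemma~\ref{charsum} is logically incompatible with the simultaneous vanishing $I_{0}=I_{1}=0$, precisely when that sharper control is needed.
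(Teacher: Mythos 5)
Your proposal is correct, and it reaches the bound by a somewhat different bookkeeping than the paper. The paper also works from Lemma~\ref{charsum}, but it counts over the square-root variable only: it sets $m=\#\{y\in\f : ay^{2}+c \text{ is a square}\}$, compares the lemma with the expression of $\sum_{y}\eta(ay^{2}+c)$ in terms of $m$ to get $m\geqslant \frac{q-1}{2}$, and then uses $A=\frac{m+(\eta(c)+1)/2}{2}\geqslant \frac{m}{2}$. You instead count all points $N$ on the conic $z^{2}=ay^{2}+c$ and expand $N=\sum_{x}(1+\eta(x))(1+\eta(ax+c))$, which gives the exact identity $A=\frac{N+2I_{0}+2I_{1}}{4}$ with the two boundary contributions $x=0$ and $x=-c/a$ made explicit; the price of this exactness is that the crude bound $|T(c/a)|\leqslant 3$ only yields $(q-3)/4$, so you must add the final incompatibility argument showing that the worst case $|T|=3$ of the lemma cannot coexist with $I_{0}=I_{1}=0$. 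That step is correct, and it is in fact more careful than the paper's own terse ``comparing the above two equalities'' (which, in the branch where its formulas give the value $3$, directly yields only $m\geqslant\frac{q-3}{2}$). One further remark: for $c\neq 0$ the sum $\sum_{x}\eta(x^{2}+c)$ is actually always $-1$ (the value $3$ in Lemma~\ref{charsum} arises from not removing the two roots of $x^{2}+c=0$ from the non-square count), so in reality $|T(c/a)|=1$ and your delicate final case never occurs; but as a deduction from the lemma as stated, your argument is sound and complete.
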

\begin{proof}
By Lemma~\ref{charsum}, we have
\begin{displaymath}
\left|\sum_{x\in \f} \eta(ax^2+c)\right| =\left|\sum_{x\in \f} \eta(x^2+ac)\right| = \left\{ \begin{array}{ll}
3, & \textrm{if $\eta(-1)=\eta(ac)=-1$,}\\
1, & \textrm{otherwise.}
\end{array} \right.
\end{displaymath}
Let $m=\#\{x\in \f\,|\, \textrm{$ ax^{2}+c$ is a square in $\f$}\}$, then
\begin{displaymath}
\left|\sum_{x\in \f} \eta(ax^2+c)\right|= \left\{ \begin{array}{ll}
|m-2-(q-m)|, & \textrm{if $\eta(-ac)=-1$,}\\
|m-(q-m)|, & \textrm{if $\eta(-ac)\neq -1$.}
\end{array} \right.
\end{displaymath}
Comparing the above two equalities, it follows that
\[
    m\geqslant \frac{q-1}{2}.
\]
Since $A=\frac{m+(\eta(c)+1)/2}{2}\geqslant \frac{m}{2}$, we complete the proof.

\end{proof}

%\begin{prop}[\cite{liwan2}]
%Suppose constants $\epsilon>0$ and $c>0$ satisfy
% \[
 %    \epsilon c^{1/2}+c<\frac{1}{q^{2/t}}-\frac{1}{p},
% \]
% where $p$ is the characteristic of the finite field $\f$.
% If $4\epsilon^{2}\ln^{2}(q)<t\leqslant cq$ and $2<\epsilon t^{1/2}$, then for any $a,b\in \f$, there exist pairwise distinct %$\alpha_{1},\cdots,\alpha_{t}\in\f$ such that
%\[
%  1+ax+bx^{2}\equiv \prod_{j=1}^{t}(1-\alpha_{j}x) \quad \mod x^{3}.
%\]
%\end{prop}
If we only consider $\frac{q}{7}<t\leqslant \frac{q-1}{2}$, taking $c=1/2$ and $\epsilon=\sqrt{2}\left(\frac{1}{q^{14/q}}-\frac{1}{p}-\frac{1}{2}\right)$ in the Theorem~5.3 in \cite{liwan2} where $p$ is the characteristic of the finite field $\f$, then we have

\begin{prop}\label{liwan2}
 If $q\geqslant257$ and $\frac{q}{7}<t\leqslant \frac{q-1}{2}$, then for any $a,b\in \f$, the system of equations
 \begin{displaymath}
\left\{\begin{array}{rcl}
a &=& \sum_{i=1}^{t}X_{i},\\
b & = & \sum_{1\leqslant i < j\leqslant t}X_{i}X_{j},\\
X_{i}&\neq & X_{j},\qquad \textrm{for all $i\neq j$}
\end{array}\right.
\end{displaymath}
 has solutions in $\f$.
 \end{prop}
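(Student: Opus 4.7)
The plan is to invoke Theorem~5.3 of Li--Wan~\cite{liwan2} directly. That theorem is a distinct-coordinate sieve: for prescribed values of a small number of elementary symmetric functions, it provides an asymptotic count (with an explicit error term of Weil type) of the ordered $t$-tuples $(x_1,\dots,x_t)\in\f^t$ whose coordinates are pairwise distinct and whose symmetric functions meet the prescription. In the present situation the prescription is precisely $e_1(X_1,\ldots,X_t)=a$ and $e_2(X_1,\ldots,X_t)=b$, two symmetric constraints; once Theorem~5.3 yields a strictly positive lower bound on the count, existence of at least one such distinct-coordinate tuple follows, which is exactly the conclusion of the proposition.

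What remains is parameter admissibility. I would feed Theorem~5.3 with $c=1/2$, which accommodates the upper bound $t\leq(q-1)/2$, and with $\epsilon=\sqrt{2}\bigl(q^{-14/q}-1/p-1/2\bigr)$. The choice of $\epsilon$ is engineered so that the Weil-type error term in Theorem~5.3 stays strictly below the main term $t!\binom{q}{t}/q^{2}$ across the whole range $q/7<t\leq(q-1)/2$: the lower threshold $t>q/7$ is precisely the point past which the sieve error, once scaled by $\epsilon$, no longer swamps the main term. Positivity of $\epsilon$ itself is what forces $q\geq 257$: with $q=p$ prime one needs $q^{-14/q}>1/p+1/2$, and at $q=257$ the left-hand side is roughly $0.74$ while the right-hand side is roughly $0.504$; since $q^{-14/q}\to 1$ as $q\to\infty$ while $1/p+1/2$ decreases, the inequality persists for all larger prime $q$.

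The main obstacle is purely technical bookkeeping: one must carefully match the statement of Theorem~5.3 of~\cite{liwan2}, phrased for general polynomial constraints on distinct-coordinate tuples via character-sum machinery and the Weil bound, with the present concrete two-constraint system on $e_1$ and $e_2$, specialize the error term to $r=2$ constraints of degree $1$ and $2$, and then solve the resulting inequality for the admissible range of $t$. I do not expect any new mathematical ingredient beyond Theorem~5.3 and a short numerical verification at the threshold $q=257$; the proposition is a clean specialization of the Li--Wan sieve to the symmetric system arising from the degree-$(k+2)$ deep-hole analysis.
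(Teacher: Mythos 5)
Your proposal is essentially identical to what the paper does: the authors give no proof of Proposition~4.7 beyond citing Theorem~5.3 of Li--Wan with exactly the parameters you name, $c=1/2$ and $\epsilon=\sqrt{2}\bigl(q^{-14/q}-1/p-1/2\bigr)$, and your positivity check of $\epsilon$ at $q=257$ is the same admissibility verification implicit in their choice of threshold. Your write-up simply makes explicit the bookkeeping the paper leaves to the reader.
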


\begin{thm}\label{dual}
 If $q\geqslant257$ and $\frac{q}{7}<t< \frac{6q}{7}$, then for any $a,b\in \f$, the system of equations
 \begin{displaymath}
\left\{\begin{array}{rcl}
a &=& \sum_{i=1}^{t}X_{i},\\
b & = & \sum_{1\leqslant i < j\leqslant t}X_{i}X_{j},\\
X_{i}&\neq & X_{j},\qquad \textrm{for all $i\neq j$}
\end{array}\right.
\end{displaymath}
 has solutions in $\f$.
 \end{thm}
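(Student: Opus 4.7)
The approach is to reduce the ``large $t$'' range $\frac{q-1}{2}<t<\frac{6q}{7}$ to the ``small $t$'' range $\frac{q}{7}<t\leq \frac{q-1}{2}$ already covered by Proposition~\ref{liwan2}, via the complement--set duality encapsulated in Lemma~\ref{opp}. For $\frac{q}{7}<t\leq \frac{q-1}{2}$, Proposition~\ref{liwan2} applies directly, so it suffices to treat $\frac{q-1}{2}<t<\frac{6q}{7}$. For any such $t$, set $t'=q-t$; since $q$ is an odd prime, $t'\leq \frac{q-1}{2}$ while $t'>\frac{q}{7}$, placing $t'$ in the range where Proposition~\ref{liwan2} is valid.

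The crux is the following dual dictionary. For a subset $S\subset \f$ of size $t$ and its complement $S'=\f\setminus S$ of size $t'$, write $\sigma_1(S)=\sum_{x\in S}x$ and $\sigma_2(S)=\sum_{\{x,y\}\subset S}xy$. Since $\sum_{x\in \f}x=0$ for $q$ odd, one has $\sigma_1(S')=-\sigma_1(S)$. From Lemma~\ref{opp}, $\sigma_2(S)+\sum_{x\in S}x^2=\sigma_2(S')$; combining with the Newton identity $\sum_{x\in S}x^2=\sigma_1(S)^2-2\sigma_2(S)$ yields $\sigma_2(S')=\sigma_1(S)^2-\sigma_2(S)$. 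Hence a distinct-entry $t$-tuple realizing $\sigma_1=a$ and $\sigma_2=b$ exists if and only if a distinct-entry $t'$-tuple realizing $\sigma_1=-a$ and $\sigma_2=a^2-b$ exists.

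To finish, given arbitrary $a,b\in \f$ and any $t$ in the upper range, apply Proposition~\ref{liwan2} to the targets $(-a,\,a^2-b)$ at size $t'=q-t\in(\frac{q}{7},\frac{q-1}{2}]$, producing a distinct-entry $t'$-tuple $(y_1,\ldots,y_{t'})$ realizing these symmetric-function values. Enumerate its entries as a set $S'\subset \f$ and take $S=\f\setminus S'$; then $|S|=t$ and, by the dictionary, any enumeration of $S$ gives a distinct-entry $t$-tuple with $\sigma_1=a$ and $\sigma_2=b$, as required.

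The main obstacle is essentially nothing new at this stage: the analytic heavy lifting has already been carried out in Proposition~\ref{liwan2} (which in turn rests on the character-sum/sieve estimates of Theorem~5.3 in~\cite{liwan2}), and Lemma~\ref{opp} is a purely combinatorial identity over $\f$. The only substantive thing to verify is the arithmetic that, for odd prime $q$, the strict inequality $t>\frac{q-1}{2}$ forces $t'\leq \frac{q-1}{2}$ and $t<\frac{6q}{7}$ forces $t'>\frac{q}{7}$, so that the complement truly lands in the regime of Proposition~\ref{liwan2} and the lower cutoff $q\geq 257$ is inherited without further work.
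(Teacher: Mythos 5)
Your proposal is correct and takes essentially the same route as the paper: both settle $\frac{q}{7}<t\leqslant\frac{q-1}{2}$ directly by Proposition~\ref{liwan2} and reduce $\frac{q+1}{2}\leqslant t<\frac{6q}{7}$ to that case by passing to the complement set in $\f$, with the complementary targets being $(-a,\,a^2-b)$. The only difference is presentational: the paper expresses the complementation by dividing the congruence $\prod_{j}(1-\alpha_j x)\equiv 1+ax+bx^2 \pmod{x^3}$ into $1-x^{q-1}$, whereas you extract the same dual dictionary from Lemma~\ref{opp} together with Newton's identity.
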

\begin{proof}The statement is equivalent to that there exist pairwise distinct $\alpha_{1},\cdots,\alpha_{t}\in\f$ such that
\[
  1+ax+bx^{2}\equiv \prod_{j=1}^{t}(1-\alpha_{j}x) \quad \mod x^{3}.
\]

By Proposition~\ref{liwan2}, it suffices to prove that if $\frac{q+1}{2}\leqslant t< \frac{6q}{7}$, for any $a,b\in\f$, there exist pairwise distinct $\alpha_{1},\cdots,\alpha_{t}\in\f$ such that
\[
  1+ax+bx^{2}\equiv \prod_{j=1}^{t}(1-\alpha_{j}x) \quad \mod x^{3}.
\]
So it is equivalent to
\[
  \frac{1-x^{q-1}}{1+ax+bx^{2}}\equiv \frac{1-x^{q-1}}{\prod_{j=1}^{t}(1-\alpha_{j}x)} \quad \mod x^{3}.
\]
Set
\[
   \{\beta_{1},\cdots,\beta_{q-t}\}=\f\setminus\{\alpha_{1},\cdots,\alpha_{t}\}.
\]
Then it is equivalent to the existence of distinct $\beta_{1},\cdots,\beta_{q-t}\in\f$ satisfying
\[
  \frac{1-x^{q-1}}{1+ax+bx^{2}}\equiv \prod_{j=1}^{q-t}(1-\beta_{j}x) \qquad \mod x^{3}.
\]
By Proposition~\ref{liwan2}, when $\frac{q}{7}<q-t\leqslant \frac{q-1}{2}$, i.e.,
\[
   \frac{q+1}{2}\leqslant t< \frac{6q}{7},
\]
the required $\beta_{1},\cdots,\beta_{q-t}$ exist.

\end{proof}
Now we prove Theorem~\ref{main}.

\begin{flushleft}
\textbf{Proof of Theorem~\ref{main}.}
\end{flushleft}
~~~(a) By Lemmas~\ref{red}-\ref{thm}, when $\frac{q+1}{2}\leqslant t\leqslant q-2$, for any $a,b\in\f$, the equation
\begin{displaymath}
\left\{\begin{array}{rcl}
b & = & \sum_{1\leqslant i < j\leqslant t}X_{i}X_{j}+\sum_{i=1}^{t}X_{i}^{2}-a\sum_{i=1}^{t}X_{i},\\
X_{i}&\neq & X_{j},\qquad \textrm{for all $i\neq j$}
\end{array}\right.
\end{displaymath}
has solutions in $\f$.

(b) For $3\leqslant t< \frac{q-11}{6}$, we prove the statement by induction on $t$.

If $t=3$, first we fix an element $x_{1}\in \f$ such that
\[
   2x_{1}^{2}-3b\neq 0.
\]
Now we need to find $x_{2}\in\f\setminus \{x_{1}\}$ such that the equation on $X$
\[
   X^{2}+(x_{1}+x_{2})X+x_{1}^{2}+x_{2}^{2}+x_{1}x_{2}-b=0
\]
has a solution $x_{3}\in\f\setminus \{x_{1},x_{2},-x_{1}-x_{2}\}$.

It is well-known that the above equation on $X$ has solutions if and only if the discriminant
 \begin{displaymath}
\begin{array}{rcl}
\Delta & =  & (x_{1}+x_{2})^{2}-4(x_{1}^{2}+x_{2}^{2}+x_{1}x_{2}-b) \\
 & = & -3x_{2}^{2}-2x_{1}x_{2}-3x_{1}^{2}+4b
\end{array}
\end{displaymath}
is a square in $\f$.

Note that the characteristic $p\neq 3$, so the discriminant
$$
\Delta=-3(x_{2}+\frac{x_{1}}{3})^{2}-\frac{8x_{1}^{2}}{3}+4b.
$$
Denote
\[
   x_{2}(\Delta)=\{x_{2}\in\f\setminus \{x_{1}\}\,|\,\textrm{the induced $\Delta$ is a square}\}.
\]
Note that $x_{2}\neq x_{1}$, by Corollary~\ref{sqrtnum}, we have
\[
  \left| x_{2}(\Delta)\right|\geqslant \left(2\left(\frac{q-1}{4}-1\right)+1\right)-1=\frac{q-5}{2}.
\]
Since the equation on $x_{2}$
\[
    X^{2}+(x_{1}+x_{2})X+x_{1}^{2}+x_{2}^{2}+x_{1}x_{2}-b=0
\]
gives at most two solutions $x_{i}^{(1)}, x_{i}^{(2)}$ for each $x=x_{i}$ ($i=1,2$) or $c^{(1)}, c^{(2)}$ for $x=-x_{1}-x_{2}$, respectively.

From the argument above, we can pick any element from the set
\[
   x_{2}(\Delta)\setminus \{x_{1}^{(1)}, x_{1}^{(2)},x_{2}^{(1)}, x_{2}^{(2)},c^{(1)}, c^{(2)}\}
\]
for $x_{2}$. And this can be done under the assumption
\[
     6<\frac{q-5}{2},\qquad \textrm{ i.e., $q>17$}.
\]
%For small $q$, $5< q\leqslant 17$, by computing with the program Magma one can check that the statement also holds.

 Now we assume that the result is true for $t\geqslant 3$.

Then for $t+1$, by induction hypothesis, we may assume $(x_{1},\cdots,x_{t})\in \f^{t}$ forms a solution of the equation
\begin{displaymath}
\left\{\begin{array}{rcl}
b & = & \sum_{1\leqslant i < j\leqslant t}X_{i}X_{j}+\sum_{i=1}^{t}X_{i}^2,\\
X_{i}&\neq & X_{j},\qquad \textrm{for all $i\neq j$,}
\end{array}\right.
\end{displaymath}
and
\[
a=\sum_{i=1}^{t}x_{i}\neq 0.
\]
 Then
\begin{displaymath}
\left\{\begin{array}{rcl}
2b-a^{2} & = & \sum_{i=1}^{t}x_{i}^2,\\
a&=&\sum_{i=1}^{t}x_{i}.
\end{array}\right.
\end{displaymath}
The same as the proof in Lemma~\ref{thm}, if $x_{i}\neq 0$ for all $1\leqslant i\leqslant t$, then $(x_{1},\cdots,x_{t},0)\in \f^{t}$ forms a solution of the equation
\begin{displaymath}
\left\{\begin{array}{rcl}
b & = & \sum_{1\leqslant i < j\leqslant t+1}X_{i}X_{j}+\sum_{i=1}^{t+1}X_{i}^2,\\
X_{i}&\neq & X_{j},\qquad \textrm{for all $i\neq j$,}
\end{array}\right.
\end{displaymath}
Otherwise, assume $x_{t}=0$. We replace $x_{t}$ by a proper element $x'_{t}\in \f^{*}$ to induce an element $x\in \f$ such that
$(x_{1},\cdots,x'_{t},x)\in \f^{t+1}$ is a solution of the equation
\begin{displaymath}
\left\{\begin{array}{rcl}
b & = & \sum_{1\leqslant i < j\leqslant t+1}X_{i}X_{j}+\sum_{i=1}^{t+1}X_{i}^2,\\
X_{i}&\neq & X_{j},\qquad \textrm{for all $i\neq j$,}
\end{array}\right.
\end{displaymath}
and
\[
 \sum_{i=1}^{t-1}x_{i}+x'_{t}+x\neq 0.
\]
Firstly, we pick $x'_{t}\in \f\setminus\{x_{1},\cdots,x_{t}\}$, which will be determined. Then by the induction hypothesis,
\begin{displaymath}
\left\{\begin{array}{rcl}
2b-a^{2}+(x'_{t})^{2}+x^{2} & = & \sum_{i=1}^{t-1}x_{i}^2+(x'_{t})^{2}+x^{2},\\
a+x'_{t}+x&= & x_{1}+\cdots+x'_{t}+x.
\end{array}\right.
\end{displaymath}
We want to find some $x'_{t}\in \f\setminus\{x_{1},\cdots,x_{t},-a\}$ such that the equation on $x$
\[
    (a+x'_{t}+x)^{2}=a^{2}-(x'_{t})^{2}-x^{2}
\]
 gives a solution $x\in \f\setminus\{x_{1},\cdots,x'_{t},-(\sum_{i=1}^{t-1}x_{i}+x'_{t})\}$.
The equation is
\[
   x^{2}+(a+x'_{t})x+x'_{t}(a+x'_{t})=0.
\]
Here, we see that why we require $x'_{t}\neq -a$ at the beginning. Indeed, if $x'_{t}= -a$, then the equation above is reduced to
\[
    x^{2}=0,
\]
which has only zero solution $x=0$. Then
\[
   \sum_{i=1}^{t-1}x_{i}+x'_{t}+x= 0,
\]
which goes against our requirement in the induction.

Similarly as the proof for the case $t=3$, the above equation on $x$ has solutions if and only if the discriminant
 \begin{displaymath}
\begin{array}{rcl}
\Delta & =  & (a+x'_{t})^{2}-4x'_{t}(a+x'_{t}) \\
 & = & -3(x'_{t})^{2}-2ax'_{t}+a^{2}
\end{array}
\end{displaymath}
is a square in $\f$.

Note that the characteristic $p\neq 3$, so the discriminant
$$
\Delta=-3(x'_{t}+\frac{a}{3})^{2}+\frac{4a^{2}}{3}.
$$
Denote
\[
   x'_{t}(\Delta)=\{x'_{t}\in\f^{*}\,|\,\textrm{the induced $\Delta$ is a square}\}.
\]
Note that $x'_{t}\neq 0$, by Corollary~\ref{sqrtnum}, we have
\[
  \left| x'_{t}(\Delta)\right|\geqslant \left(2\left(\frac{q-1}{4}-1\right)+1\right)-1=\frac{q-5}{2}.
\]
Since the equation on $x'_{t}$
\[
   x^{2}+(a+x'_{t})x+x'_{t}(a+x'_{t})=0
\]
gives at most two solutions $x_{i}^{(1)}, x_{i}^{(2)}$ for each $x=x_{i}$ ($i=1,2,\cdots,t-1$) or $c^{(1)}, c^{(2)}$ for $x=-(\sum_{i=1}^{t-1}x_{i}+x'_{t})$ or $(x'_{t})^{(1)}, (x'_{t})^{(2)}$ for $x=x'_{t}$, respectively.

From the argument above, we pick any element $x'_{t}$ from the set
\[
    x'_{t}(\Delta)\setminus \{x_{1},\cdots,x_{t},x_{1}^{(1)}, x_{1}^{(2)},\cdots,x_{t-1}^{(1)}, x_{t-1}^{(2)},(x'_{t})^{(1)}, (x'_{t})^{(2)},c^{(1)}, c^{(2)},-a\}
\]
and finish the induction. And under the assumption
\[
     3t+3<\frac{q-5}{2},\qquad \textrm{ i.e., $t<\frac{q-11}{6}$},
\]
such an $x'_{t}$ always exists.

Since
\[
       \frac{q}{7}< \frac{q-11}{6}
\]
as $q\geqslant257$, by combining (a), (b) and Theorem~\ref{dual}, we complete the proof of Theorem~\ref{main}.
 \hfill$\Box$

\end{document}